\providecommand{\algorithmname}{Algorithm}
\theoremstyle{plain}
\newtheorem{thm}{\protect\theoremname}
\theoremstyle{plain}
\newtheorem{prop}[thm]{\protect\propositionname}
\theoremstyle{definition}
\newtheorem{defn}[thm]{\protect\definitionname}
\theoremstyle{remark}
\newtheorem{rem}[thm]{\protect\remarkname}
\theoremstyle{plain}
\newtheorem{cor}[thm]{\protect\corollaryname}
\theoremstyle{plain}
\newtheorem{lem}[thm]{\protect\lemmaname}
\newenvironment{proof}[1][\protect\proofname]{\par
\normalfont\topsep6\p@\@plus6\p@\relax
\trivlist
\itemindent\parindent
\item[\hskip\labelsep\scshape #1]\ignorespaces
}{%
\endtrivlist\@endpefalse
}
\providecommand{\proofname}{Proof}
\theoremstyle{remark}
\providecommand{\claimname}{Claim}
\providecommand{\definitionname}{Definition}
\providecommand{\lemmaname}{Lemma}
\providecommand{\propositionname}{Proposition}
\providecommand{\remarkname}{Remark}
\providecommand{\theoremname}{Theorem}
\providecommand{\corollaryname}{Corollary}
\newcommand{\V}{\mathcal{V}}
\begin{document}
\begin{center}
\textbf{\large  DISTRIBUTED STOCHASTIC APPROXIMATION\\ \ \\ WITH LOCAL PROJECTIONS}
\end{center}

\ \\

\begin{center}
SUHAIL MOHMAD SHAH AND VIVEK S.\ BORKAR\footnote{ Work supported in part by a grant for `Approximation of High Dimensional Optimization and Control Problems' from the Department of Science and Technology, Government of India.}\\
\ \\
Department of Electrical Engineering,\\
Indian Institute of Technology Bombay,\\
Powai, Mumbai 400076, India.\\
(suhailshah@ee.iitb.ac.com, borkar.vs@gmail.com).
\end{center}

\ \\

\noindent \textbf{Abstract} We propose a distributed version of a stochastic approximation scheme constrained to remain in the intersection of a finite family of convex sets. The projection to the intersection of these sets is also computed in a distributed manner and a `nonlinear gossip' mechanism is employed to blend the projection iterations with the stochastic approximation using multiple time scales.\\

\noindent \textbf{Key words} distributed algorithms; stochastic approximation; projection; differential inclusions; multiple time scales

\newpage

\section{Introduction}

In a landmark paper, Tsitsiklis et al \cite{Tsitsiklis} laid down a framework for distributed computation, notably for distributed optimization algorithms. They developed it further in \cite{Bertsekas}. There was a lot of subsequent activity and variations, an extensive account of which can be found in \cite{Sayed}. The key idea of \cite{Tsitsiklis} was to combine separate iterations by different processors/agents with an averaging mechanism that couples them and leads to a `consensus' among the separate processors, often on the desired objective (e.g., convergence to a common local minimum). The averaging mechanism itself has attracted much attention on its own as gossip algorithm for distributed averaging \cite{Shah} and various models for dynamic coordination \cite{Olfati}. One way to view these algorithms is as a two time scale dynamics with averaging on the fast or `natural' time scale dictated by the iterate count $n = 1,2,\cdots$ itself, with the rest (e.g., a stochastic gradient scheme) being a \textit{regular} perturbation thereof on a slower time scale, dictated by the chosen stepsize schedule. Then the convergence results can be viewed as
the fast averaging process leading to the confinement of the slow dynamics to the former's invariant subspace, which is the one dimensional space spanned by constant vectors. This is just a fancy way of interpreting consensus, but lends itself to some natural generalizations which were taken up in \cite{Mathkar}. Here the averaging was replaced by some nonlinear operation with the conclusion that it forced asymptotic confinement of the slow iterates given by a stochastic approximation scheme (SA for short) to \textit{its} invariant set. Our aim here is to leverage this viewpoint to propose a distributed algorithm for constrained computation wherein we want asymptotic confinement to the intersection of a finite family of compact convex sets. A prime example of such an exercise is constrained optimization, though the scheme we analyze covers the much broader class of projected stochastic approximation algorithms. In particular, the nonlinear operation gets identified with a distributed scheme for projection onto the intersection of a finite family of convex sets.\\

To contrast this with the classical projected stochastic approximation \cite{Kushner}, note that in the latter, a projection is performed at each step. In practice this may entail another iterative scheme to compute the projection as a subroutine, so that one waits for its near-convergence at each step. In our scheme, this iteration is embedded in the stochastic approximation iteration as a fast time scale component so that it can be carried out concurrently.\\

Our scheme is inspired by the results of \cite{Mathkar}. The results of \cite{Mathkar}, however, use strong regularity conditions such as  Frechet differentiability of the nonlinear map (among others) which are unavailable in the present case, making the proofs much harder. An additional complication is that the `fast' dynamics in one of the algorithms is itself a two time scale dynamics with stochastic approximation-like time-dependent iteration. This creates several additional difficulties. Thus the proofs of \textit{ibid.} cannot be applied here directly.  This is even more so for the test of stability  in section 4, which requires a significantly different proof.\\


\textbf{Relevant literature :} The literature on distributed algorithms is vast, mostly building upon the seminal work of \cite{Tsitsiklis}. The most relevant works for our purposes are \cite{Lee}-\cite{Srivastava1}, where distributed optimization algorithms with local constraints are considered. However, all of these are concerned primarily with convex optimization (and without noise). In \cite{Ozdaglar}, the convergence analysis is done for projected convex optimzation for the special case when the network is completely connected. The work in \cite{Srivastava}, \cite{Srivastava1} extends the algorithm of \cite{Ozdaglar} and its analysis to a more general setting including the presence of noisy links. \cite{Bianchi} considers distributed optimization for non-convex functions but again without distributed projection.\\

\textbf{Contributions :} The main contribution of this paper is that algorithms are provided for projected distributed SA where the projection component is distributed (so that only local constraints are required at the nodes). This is helpful when projection onto the entire constraint set is not possible (constraints are known only locally) or computationally prohibitive (a large number of constraints).  Moreover, in our algorithms the projection is tackled on a faster time scale which is a new feature that has not been explored previously and is of independent interest.
In both the algorithms proposed here, the projection component gives the exact projection of the point provided and not just a feasible point. This fact seems to be crucial in extending previous works, which mainly consider convex optimization with noiseless gradient measurements, to a fully distributed algorithm for the much more general case of a Robbins-Monro type stochastic approximation scheme. The feasibility and convergence properties of previous works depend critically on the specifics of convex optimization such as the convexity of the function being optimized. The main compromise here is that our  convergence results are established under a stability assumption which would not be required for a compact constraint set if exact projection was performed at every step.\\

The remainder of this section sets up the notation and describes our algorithm. The next section summarizes  the algorithm and some key results from projected dynamical systems. Section 3 details the main convergence proof assuming boundedness of iterates. The latter is separately proved in section 4. Section 5 provides some numerical results.\\

\subsection{Notation}
The projection
operator onto a constraint set $\mathcal{X}$ is denoted by $P_{\mathcal{X}}(\cdot)$,
i.e.
\[
\mathbf{}P_{\mathcal{X}}(y)=\textrm{arg}\min_{x\in\mathcal{X}}\|y-x\|
\]
is the Euclidean projection onto $\mathcal{X}$. We will be considering the case where
$$\mathcal{X}=\bigcap_{i=1}^{N}\mathcal{X}_{i}.$$
The projection operator
onto an individual constraint set $\mathcal{X}_i$ is denoted by $P^{i}(\cdot)$.

Since we are dealing with distributed computation, we use a stacked
vector notation. In particular $x_{k}=[(x_{k}^{1})^T\cdot\cdot\cdot (x_{k}^{N})^T]^{T}$
where $x_{k}^{i}$ is the value stored at the $i$'th node,
 so that
the superscript indicates the node while the subscript the iteration
count. Similarly any function under consideration denoted by $h(\cdot):\mathbb{R}^{Nn}\to\mathbb{R}^{Nn}$
represents $h(x_{k})=[h^{1}(x_{k}^{1})^T\cdot\cdot\cdot h^{n}(x_{k}^{N})^T]^{T}$
where each $h^{i}:\mathbb{R}^{n}\to\mathbb{R}^{n}$ is a component
function. We let a bold faced $\mathbf{P}(\cdot)$ denote the following
operator on the space $\mathbb{R}^{Nn}$ :
\[
\mathbf{P}(x_{k})=[P^{1}(x_{k}^{1})^T, \cdot\cdot\cdot, P^{N}(x_{k}^{N})^T]^{T}
\]
 with $P^{i}$ as described above. Let $\mathbf{1}$ denote the constant vector of all $1$'s of appropriate dimension. We let $\varotimes$ denote the
Kronecker product between two matrices and $\langle x_{k}\rangle$
denote the average of $x_{k}$, so that
\[
\langle x_{k}\rangle=\frac{1}{N}\mathbf{(1^{T}\varotimes}I_{n})x_{k}=\frac{x_{k}^{1}+\cdot\cdot\cdot+x_{k}^{N}}{N}
\]
with a bold faced $\mathbf{\langle x_{k}\rangle}=[\langle x_{k}\rangle^T,\cdot\cdot\cdot,\langle x_{k}\rangle^T]^T$.
A differential inclusion is denoted as
\[
\dot{x}\in F(x)
\]
where $F(\cdot)$ is a set valued map. Let $F^{\delta}( \cdot )$ denote the the following set :
\[
F^{\delta}(x)=\big\{ z\in\mathbb{R}^{n}\,:\,\exists x' \,\textrm{such that}\, \ \|x-x'\|<\delta,  \ \text{d}(z,F(x'))<\delta\big\}
\]
where $\text{d}(z,A)=\inf_{y \in A}\|z-y\|$ is the distance of a point $z$ from a set $A$. The notation $f(x)=o(g(x))$ is used to denote the fact
\[
\lim_{x\to\infty}\frac{f(x)}{g(x)}\to0,
\]
while $f(x)=\mathcal{O}(g(x))$ represents
\[
\limsup_{x\to\infty}\left|\frac{f(x)}{g(x)}\right| \leq M
\]
for some constant $M < \infty$.

\section{Background}

\subsection{Set-up}
Suppose we have a network of $N$ agents indexed by $1, ..., N.$ We associate with each agent $i$, a function $h^i : \mathbb{R}^n\to \mathbb{R}^n$ and a constraint set $\mathcal{X}_{i}$. The global constraint, given as the intersection of all $\mathcal{X}_{i}$'s,  is denoted by $\mathcal{X}$ :$$\mathcal{X}=\bigcap_{i=1}^n \mathcal{X}_i.$$

Also, let $H:\mathbb{R}^n\to\mathbb{R}^n$ denote
\begin{equation}\label{H-def.}
H(\cdot) := \frac{1}{N}\sum_{i=1}^Nh^i(\cdot).
\end{equation}
In many applications, one has $h^i(\cdot) = f(\cdot)\,\forall i$.
Let the communication network be modeled by a static undirected graph
$\mathcal{G=}\{\mathcal{V},\mathcal{E}\}$ where $\mathcal{V}=\{1,...,N\}$
is the node set and $\mathcal{E\subset\mathcal{V}\mathcal{\times}\mathcal{V}}$
is the set of links $(i,j)$ indicating that agent $j$ can send information
to agent $i$. All of the arguments presented here can be extended
to a time-varying graph under suitable assumptions as in \cite{Phade}. Here we deal only with
a static network for ease of notation.\\

We associate with the network a non-negative weight matrix
 $Q = [[q_{ij}]]_{i,j \in \V}$ such that
 $$q_{ij} > 0 \Longleftrightarrow (i,j)\in\mathcal{E}.$$
 In addition,
the following  assumptions are made on the matrix $Q$ and the constraint set :\\

\textbf{Assumption 1:}\\

i) {[}\textit{Convex constraints}{]} For all $i$, $\mathcal{X}_{i}$ are convex and compact. Also, the set $\mathcal{X}$ has a non-empty interior.\\

ii) {[}\textit{Double Stochasticity}{]} $\mathbf{1}^{T}Q=\mathbf{1}^{T}$
and $Q\mathbf{1}=\mathbf{1}$.\\

iii) {[}\textit{Irreducibility and aperiodicity}{]} We assume that the underlying graph is irreducible, i.e., there is a directed path from any  node to any other node, and  aperiodic, i.e., the g.c.d.\ of lengths of all paths from a node to itself is one. It is known that the choice of node in this definition is immaterial. This property can be guaranteed, e.g.,  by making $q_{ii}>0$ for some $i$.\\

This implies that the spectral norm $\gamma$ of $Q-\frac{\mathbf{1}\mathbf{1}^{T}}{N}$
satisfies $\gamma<1$.
This guarantees in particular that
\begin{equation}
\|\big(Q^{k}-Q^{*}\big)u\|\leq\kappa\beta^{-k}\|u\| \label{eq:-3}
\end{equation}
for some $\kappa > 0, \beta>1$, with $Q^{*}$ denoting the matrix $\frac{\mathbf{1}\mathbf{1}^{T}}{N}$.
\[
\]

\subsection{Distributed Projection Algorithms}

We now give details of two algorithms for computing a distributed projection. ``Distributed Projection'' here
means that the projection onto a particular constraint set $\mathcal{X}_i$ is performed by
just one processor/agent and they communicate information with each
other in order to compute projection onto the intersection  $\mathcal{X}.$\\

\noindent{\textit{A. Gradient Descent} :} The first approach involves viewing the projection problem as the minimization of the error
norm subject to the appropriate constraints, so that the projection of
a point $x_{0}$ can be thought of as the solution to the following optimization
problem :
\begin{equation}\label{optproj}
\min_{z \in \mathbb{R}^n}\,\,\|z-x_{o}\|^{2}
\end{equation}
\[
\text{s.t. }\,\,z\in\mathcal{X}=\cap \, \mathcal{X}_{i}\,,\,i=1,...,N
\]

In the distributed setting we associate each constraint $\mathcal{X}_{i}$ with an agent $i$. To solve the projection problem in a distributed fashion we first re-cast it as :

\[
\begin{array}{ccc}
\min_{z\in\mathbb{R}^{n}}\,\,\|z-x_{0}\|^{2}\,\,\,\,\,\,\,\,\,\,\, &  & \,\,\,\,\,\,\,\,\,\min_{\{z^i\in\mathbb{R}^{n},i=1, \cdots, N\}}\,\,\frac{1}{N}\sum_{i=1}^{N}\|z^{i}-x_{0}\|^{2}\\
\text{s.t. }\,\,z\in\mathcal{X}\,\,\,\,\,\,\,\,\,\,\, & \Leftrightarrow & \,\,\,\,\,\,\,\,\,\,\,\text{s.t. }\,\,z^{i}\in\mathcal{X}\, \ \forall\,i\\
 &  & \,\,\,\,\,\,\,\,\,\,\,z^{i}=z^{j}\, \ \forall\,i,j
\end{array}
\]

It is obvious that both the  problems have the same unique minimizer. The problem on the right can be solved by using a distributed gradient
descent of the form of equations 2a-2b, \cite{Lee} which for our case becomes :

\begin{equation}\label{graddesc}
z_{k+1}^{i}= P^i \Big\{ \sum_{j=1}^{N}q_{ij}z_{k}^{j}-b_{k}[\sum_{j=1}^{N}q_{ij}z^j_{k}-x_{0}] \Big\}
\end{equation}
where $b_k$ satisfies $ \sum_kb_k = \infty, \ \sum_k b_k^2 < \infty$. Note that the term inside the square brackets is proportional to the gradient of the function in (\ref{optproj}) evaluated at $\sum_{j=1}^{N}q_{ij}z^j_{k}$ .

\begin{lem} \label{projlem}
For any $x_{0}\in\mathbb{R}^{n}$ the iteration (\ref{graddesc}) converges
to the projection of $x_0$ upon $\mathcal{X}$, i.e.,
\[
z_{k}^{i}\to P_\mathcal{X}(x_{0})\,\,\forall\,i.
\]
\end{lem}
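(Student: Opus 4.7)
The plan is to view (\ref{graddesc}) as a two-time-scale iteration. Rewriting
\begin{equation*}
\sum_{j}q_{ij}z_{k}^{j}-b_{k}\Big[\sum_{j}q_{ij}z_{k}^{j}-x_{0}\Big] \;=\; (1-b_k)\sum_{j}q_{ij}z_{k}^{j}+b_{k}x_{0},
\end{equation*}
one identifies a fast ``nonlinear gossip'' component $z^{i}\mapsto P^{i}(\sum_{j}q_{ij}z^{j})$---which by (\ref{eq:-3}) and nonexpansiveness of each $P^{i}$ contracts toward the consensus subspace---together with a slow drift of magnitude $b_{k}\to 0$ pulling each coordinate toward $x_{0}$. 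Because $\sum_{k}b_{k}=\infty$ and $\sum_{k}b_{k}^{2}<\infty$, this drift dictates the ultimate limit while accumulated error remains controlled.

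First I would establish asymptotic consensus. Set $w_{k}^{i}:=z_{k}^{i}-\langle z_{k}\rangle$ and $u_{k}^{i}:=\sum_{j}q_{ij}z_{k}^{j}$; double stochasticity together with (\ref{eq:-3}) yields $\sum_{i}\|u_{k}^{i}-\langle z_{k}\rangle\|^{2}\leq \gamma^{2}\sum_{i}\|w_{k}^{i}\|^{2}$. Combining this with nonexpansiveness of each $P^{i}$ at a common reference point, and using the fact that after one step $z_{k}^{i}\in\mathcal{X}_{i}$ so that the differing projection directions contribute only an $O(b_{k})$ spread, produces a recursion of the form $\|w_{k+1}\|\leq \gamma\|w_{k}\|+C b_{k}$. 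A discrete Gronwall estimate, with $\gamma<1$ and $b_{k}\to 0$, then forces $\|w_{k}\|\to 0$.

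Next, set $z^{*}:=P_{\mathcal{X}}(x_{0})$ and $V_{k}:=\tfrac{1}{N}\sum_{i}\|z_{k}^{i}-z^{*}\|^{2}$. Since $z^{*}\in\mathcal{X}_{i}$ for every $i$, nonexpansiveness of $P^{i}$ at $z^{*}$ gives $\|z_{k+1}^{i}-z^{*}\|^{2}\leq \|(1-b_{k})u_{k}^{i}+b_{k}x_{0}-z^{*}\|^{2}$. Expanding, averaging over $i$, and using $\tfrac{1}{N}\sum_{i}\|u_{k}^{i}-z^{*}\|^{2}\leq V_{k}$ (Jensen plus double stochasticity) and $\tfrac{1}{N}\sum_{i}(u_{k}^{i}-z^{*})=\langle z_{k}\rangle - z^{*}$, one arrives at
\begin{equation*}
V_{k+1}\leq (1-b_{k})^{2}V_{k}+2b_{k}(1-b_{k})\,\langle \langle z_{k}\rangle-z^{*},\,x_{0}-z^{*}\rangle+b_{k}^{2}\|x_{0}-z^{*}\|^{2}.
\end{equation*}
Splitting the cross term via $P_{\mathcal{X}}(\langle z_{k}\rangle)$, the variational inequality characterizing $z^{*}$ makes the first piece $\leq 0$, while the second is $O(\mathrm{dist}(\langle z_{k}\rangle,\mathcal{X}))$, which vanishes thanks to the consensus step (combined with $z_{k}^{i}\in\mathcal{X}_{i}$ and the regularity hypothesis $\mathrm{int}(\mathcal{X})\neq\emptyset$, via a Hoffman-type bound). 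A Robbins--Siegmund-type lemma for non-negative sequences then yields $V_{k}\to 0$, whence $z_{k}^{i}\to z^{*}$ for every $i$.

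The main obstacle I anticipate is the bidirectional coupling between the consensus error $\|w_{k}\|$ and the tracking error $V_{k}$ in the presence of the non-smooth projections $P^{i}$; the Frechet-differentiability-based framework of \cite{Mathkar} is not available. The two errors must therefore be propagated in tandem through a coupled recursion, relying solely on nonexpansiveness of each $P^{i}$ at the target $z^{*}$ and the observation that $z_{k}^{i}\in\mathcal{X}_{i}$ combined with near-consensus eventually forces $\langle z_{k}\rangle$ into a shrinking neighbourhood of $\mathcal{X}$.
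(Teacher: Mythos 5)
Your Lyapunov half is sound and is essentially the argument hiding inside the reference the paper leans on (the paper's own ``proof'' is a one-line appeal to Prop.~1 of \cite{Lee}, checking that its Assumptions 1--5 hold): non-expansiveness of $P^i$ at $z^*=P_{\mathcal X}(x_0)\in\mathcal X_i$, double stochasticity to collapse the averages, and the variational inequality $\langle x_0-z^*,\,y-z^*\rangle\le 0$ for $y\in\mathcal X$ to kill the cross term up to an error of order $\mathrm{dist}(\langle z_k\rangle,\mathcal X)$. The gap is in the consensus step, and it is a real one. You claim the recursion $\|w_{k+1}\|\le\gamma\|w_k\|+Cb_k$ on the grounds that, since $z_k^i\in\mathcal X_i$ after one step, ``the differing projection directions contribute only an $O(b_k)$ spread.'' That is only true under \emph{exact} consensus: if $z_k^1=\cdots=z_k^N=v$ then indeed $v\in\bigcap_i\mathcal X_i$ and each $P^i$ moves $(1-b_k)v+b_kx_0$ by at most $b_k\|v-x_0\|$. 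Under approximate consensus the displacement of agent $i$ is $\|P^i(v_k^i)-v_k^i\|=\mathrm{dist}(v_k^i,\mathcal X_i)\le\sum_j q_{ij}\|z_k^j-z_k^i\|+O(b_k)=O(\|w_k\|_\infty)+O(b_k)$, and these displacements point in uncontrolled, agent-dependent directions. The disagreement recursion therefore reads $\|w_{k+1}\|\le(\gamma+c)\|w_k\|+Cb_k$ with a constant $c$ coming from the projection displacements that nothing forces to satisfy $\gamma+c<1$. Non-expansiveness alone cannot save this: two agents holding nearly identical points can be mapped far apart by their respective $P^i$'s when those points are far from $\mathcal X_i\cap\mathcal X_j$. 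So consensus does not follow from the recursion as you set it up, and since your proof that $\mathrm{dist}(\langle z_k\rangle,\mathcal X)\to 0$ (and hence the Lyapunov conclusion) rests on consensus, the whole chain stalls here.

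The standard repair, which is how \cite{Lee} and the Nedi\'c--Ozdaglar--Parrilo line of work actually proceed, is to resolve the coupling in the opposite order. One first runs the Lyapunov recursion using \emph{firm} non-expansiveness, $\|P^i(v)-z^*\|^2\le\|v-z^*\|^2-\|P^i(v)-v\|^2$ for $z^*\in\mathcal X_i$, which upon summing the telescoping inequality yields $\sum_k\|P^i(v_k^i)-v_k^i\|^2<\infty$ \emph{before} anything is known about consensus. The projection displacements are thus square-summable perturbations of the linear averaging dynamics, and only then does the geometric mixing bound (\ref{eq:-3}) deliver $\|w_k\|\to 0$; after that your $\mathrm{dist}(\langle z_k\rangle,\mathcal X)\to 0$ step (via the nonempty-interior regularity bound) and the supermartingale-type conclusion $V_k\to 0$ go through. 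You correctly sense at the end that the two errors must be ``propagated in tandem,'' but the concrete recursion you wrote down for the consensus error is the wrong one, and fixing it requires the firm non-expansiveness idea, which is absent from your sketch.
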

\begin{proof}

The function being optimized is strongly convex so that the optimal point is unique. We can directly invoke Prop.\ 1 \cite{Lee}
which guarantees convergence to the unique minimum in the set $\mathcal{X}$
and this unique minimum is the projection
point $P_\mathcal{X}(x_{0})$. (Note that Assumption 1 here is necessary for the equations 2a-2b of \cite{Lee} to converge to the projection point. Specifically Assumptions 1-5 of \cite{Lee} are satisfied for our case.)
\end{proof}

 \noindent{\textit{B. Distributed Boyle-Dykstra-Han} :} The algorithm originally proposed in \cite{Phade} is as follows:

\begin{algorithm}[H]
\textbf{Input :} $y_{0}\in\mathbb{R}^{n}$ ;\\

1: Set $z_{0}^{i}=y_{0}\,$ and $x_{0}^{i}=0$ for all $i$.\\

2: \textbf{for} k=1,2,.... \textbf{do}\\

3: At each node $i\in\mathcal{I}$ : \textbf{do }\\

3: $x_{k}^{i}=\sum_{j=1}^{N}q_{ij}\big\{ x_{k-1}^{j}+P^{j}(z_{k}^{j})\big\}-P^{i}(z_{k}^{i})$.\\

4: $z_{k+1}^{i}=z_{k}^{i}+b_{k}x_{k}^{i}$.\\

5: \textbf{end for }

\caption{Distributed Projection Scheme }
\end{algorithm}

The step size $b_k$ is assumed to satisfy : $$ \sum_kb_k = \infty, \ \sum_k b_k^2 < \infty$$

and for any $\epsilon>0$, there exists an $\alpha \in (1,1+\epsilon)$ and some $k_0$ such that
\begin{equation}\label{apndxb1}
\alpha b_{k+1}\geq b_k,\,\,\,\forall k>k_0,
\end{equation}

Let us write the above in vector notation as
\begin{eqnarray}
x_{k} & =& (Q\varotimes I_{n})\{x_{k-1}+\mathbf{P}(z_{k})\}-\mathbf{P}(z_{k}) \label{exeq}\\
z_{k+1} & =& z_{k}+b_{k}x_{k}. \label{zedeq}
\end{eqnarray}

The next theorem states that the above algorithm gives the exact projection
of the initial point $y_{0}$.
\begin{thm}\label{Sohamresult}
Suppose $z_{0}^{i}=y_{0}\,\forall\,i.$ Then $z_{k}\to z^{*}=[z_{1}^{*},...,z_{n}^{*}]$,
such that
\[
P^{i}(z_{i}^{*})=\mathbf{P}(y_{0}).
\]

\end{thm}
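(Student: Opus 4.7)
The plan is a two-time-scale analysis anchored by a conservation law and closed by a KKT calculation. First, exploit the double stochasticity of $Q$: multiplying (\ref{exeq}) on the left by $(\mathbf{1}^{T}\varotimes I_n)$ and using $\mathbf{1}^{T}Q=\mathbf{1}^{T}$ gives $(\mathbf{1}^{T}\varotimes I_n)x_k\equiv 0$, so (\ref{zedeq}) preserves the average: $\langle z_k\rangle = y_{0}$ for all $k$. This is the invariant that will ultimately pin down the limit.

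Next, establish fast-scale consensus of the auxiliary quantity $w_k:=x_k+\mathbf{P}(z_k)$. A direct manipulation of (\ref{exeq}) gives
\[
w_k=(Q\varotimes I_n)\bigl[w_{k-1}+\mathbf{P}(z_k)-\mathbf{P}(z_{k-1})\bigr].
\]
Since $\mathbf{P}$ is nonexpansive and $\|z_k-z_{k-1}\|=b_{k-1}\|x_{k-1}\|$, the forcing term is $O(b_{k-1})$ along any bounded orbit, while (\ref{eq:-3}) contracts the off-consensus part of $w_k$ geometrically. A Gronwall-type bookkeeping, with the stepsize regularity condition (\ref{apndxb1}) preventing the slow perturbation from accumulating, forces $w_k-\mathbf{\langle w_k\rangle}\to 0$; equivalently, $x_k^{i}+P^{i}(z_k^{i})\to\langle\mathbf{P}(z_k)\rangle$ for every agent $i$. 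Interpolating $z_k$ in continuous time on the slow grid $t_k=\sum_{j<k}b_j$ then identifies it as an asymptotic pseudo-trajectory of the ODE $\dot z^{i}=\frac{1}{N}\sum_{j}P^{j}(z^{j})-P^{i}(z^{i})$; a Lyapunov argument exploiting firm nonexpansivity of each $P^{i}$ drives this ODE to the set $\{z^{*}:P^{1}(z^{1*})=\cdots=P^{N}(z^{N*})=:p^{*}\}$, and standard two-time-scale theory lifts this back to $z_k\to z^{*}$ on that set.

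To identify $p^{*}$, sum the local optimality conditions $z^{i*}-p^{*}\in N_{\mathcal{X}_i}(p^{*})$ over $i$ and combine with the conservation $\sum_i z^{i*}=Ny_{0}$ to get $N(y_{0}-p^{*})\in\sum_i N_{\mathcal{X}_i}(p^{*})\subseteq N_{\mathcal{X}}(p^{*})$, the last inclusion being the standard normal-cone calculus under the nonempty-interior hypothesis in Assumption~1(i). Since $p^{*}\in\mathcal{X}$, this is exactly the KKT condition for $p^{*}=P_{\mathcal{X}}(y_{0})$, completing the proof. The main obstacle is the fast-scale step: rigorously controlling $w_k$ despite the slow $b_k$-perturbation appearing inside its own recursion. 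The regularity condition (\ref{apndxb1}), inherited from \cite{Phade}, is precisely what lets the geometric contraction of $Q-Q^{*}$ dominate the slow drift so that the consensus of $w_k$ can be bootstrapped in the first place.
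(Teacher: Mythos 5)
Your outline reproduces, essentially step for step, the proof strategy of the paper's cited source: the paper defers to \cite{Phade}, whose argument is exactly this combination of the conservation law $\langle z_k\rangle\equiv y_0$ from double stochasticity, geometric consensus of $x_k+\mathbf{P}(z_k)$ under the stepsize condition (\ref{apndxb1}), an ODE limit analyzed with the Lyapunov function $z\mapsto\|z-z^*\|^2$, and the normal-cone summation identifying the common projection point as $P_{\mathcal{X}}(y_0)$ --- the same ingredients also appear in the paper's Lemma \ref{mainlem-soham} and in the Appendix proof of Lemma \ref{xbound}. One small correction: the inclusion $\sum_i N_{\mathcal{X}_i}(p^*)\subseteq N_{\mathcal{X}}(p^*)$ holds unconditionally (each summand already separates $p^*$ from all of $\mathcal{X}\subseteq\mathcal{X}_i$), so the nonempty-interior hypothesis is not what that step needs; where the constraint qualification genuinely enters is in guaranteeing the reverse inclusion, hence the existence of an equilibrium $z^*$ on the hyperplane $\sum_i z^{i}=Ny_0$ that anchors your Lyapunov function.
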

The detailed proof can be found in \cite{Phade} where an ODE approximation
 is used along with the associated Lyapunov function $z \mapsto \|z-z^{*}\|^{2}$.\\

 \begin{rem}\label{remark1}
The above algorithm is inspired from a parallel version of Boyle-Dykstra-Han originally proposed in \cite{Iusem}. The main difference from \cite{Iusem} is that the weights in \cite{Phade} are derived from the related graph of the communication network. The compromise is that a decaying time step is required to ensure convergence which may affect the convergence rate. We invite the reader to go through \cite{Phade} to get some more intuition regarding the algorithm and what the exact roles of $x_k$ and $z_k$ are in the context of Boyle-Dykstra-Han algorithm.
\end{rem}

\subsection{The algorithm}

The first algorithm for distributed projected SA is stated below along with the assumptions  on the various terms involved.

\begin{algorithm}[H]

1: Initialize $y_{0}^{i}$ and set $z_{k}^{i}=0$ for all
$i$.\\

2: \textbf{for} k=1,2,.... \textbf{do:}\\

3: At each node $i\in\mathcal{I}$ : \textbf{do }\\

a: [\textbf{Fast Time Scale]} Distributed Projection Step:\\

a1:  $z_{k+1}^{i}=P^{i}\big(\sum_{j=1}^{N}q_{ij}z_{k}^{j}-b_{k}(\sum_{j=1}^{N}q_{ij}z_{k}^{j}-y_{k}^{i})\big)$\\

b: Derive a noisy sample "$h^{i}(y_{k}^{i}) + M^i_{k+1}$" of  $h^{i}(y_{k}^{i})$ from a sampling oracle.\\

c: \textbf{[Slow Time Scale]} Distributed Stochastic Approximation Step:\\

c1: $y_{k+1}^{i}=\sum_{j=1}^{N}q_{ij}y_{k}^{j}+a_{k}(z^i_k-y_{k}^{i})+a_{k}(h^{i}(y_{k}^{i}) + M^i_{k+1})$\\

4: \textbf{end for }\\

\caption{Distributed SA with Gradient Descent (DSA-GD)}
\end{algorithm}

We make the following key assumptions :\\

\textbf{Assumption 2 :}\\

i) For each $i$, the function $h^i :\mathbb{R}^n \to \mathbb{R}^n$ is Lipschitz\\

ii) For each $i$, $\{M_k^i\}$ is a martingale difference sequence with respect to the the filtration  {$ \mathcal{F}^i_k := \sigma (y^i_{\ell},M^i_{\ell}, \ell \leq k)$},  i.e., it is a sequence of zero mean random variables satisfying:
$$\mathbb{E}\left[M^i_{k+1} | \mathcal{F}^i_k \right] = 0.$$ where $\mathbb{E}[\, \ \cdot \ | \ \cdot \ \,]$ denotes the conditional expectation. In addition we also assume a conditional variance bound
\begin{equation}
\mathbb{E}\left[\|M^i_{k+1}\|^2 | M^i_{\ell}, y^i_{\ell}, \ell \leq k\right] \leq K(1 +\|y^i_k\|^2) \ \ \forall k,i\text{ a.s.} \label{mgbd}
\end{equation}
for some constant $K>0$.\\

iii) Stepsizes $\{a_{k}\}$
and $\{b_{k}\}$ are positive scalars which satisfy :
$$a_{k}=o(b_{k}), \ \sum_ka_k = \sum_kb_k = \infty, \ \sum_k(a_k^2+b_k^2) < \infty,$$

iv) The iterates $y_k$ are a.s.\ bounded, i.e
\begin{equation}
 \sup_k \|y_k\|<\infty\ \ \mbox{a.s.} \label{y-stable}
\end{equation}
As in classical analysis of stochastic approximation algorithms by the o.d.e. method, we prove convergence assuming the stability condition Assumption 2(iv). In Section 4 we give sufficient conditions for  the latter to hold.

We now give a variant of the above algorithm using the distributed Boyle-Dykstra-Han algorithm (Algorithm 1) :

\begin{algorithm}[H]

1: Initialize $y_{0}^{i}$ and set $z_{k}^{i}=0,\,x_{k}^{i}=0$ for all
$i$\\

2: \textbf{for} k=1,2,.... \textbf{do:}\\

3: At each node $i\in\mathcal{I}$ : \textbf{do }\\

a: [\textbf{Fast Time Scale]} Distributed Projection Step:\\

a1: $x_{k}^{i}=\sum_{j=1}^{N}q_{ij}\{x_{k-1}^{j}+P^{j}(z_{k}^{j}+y_{k}^{j})\}-P^{i}(z_{k}^{i}+y_{k}^{i})$\\

a2: $z_{k+1}^{i}=z_{k}^{i}+b_{k}x_{k}^{i}$\\

b: Derive a noisy sample "$h^{i}(y_{k}^{i}) + M^i_{k+1}$" of  $h^{i}(y_{k}^{i})$ from a sampling oracle.\\

c: [\textbf{Slow Time Scale]} Distributed Stochastic Approximation Step:\\

c1: $\bar{y}_{k}^{i}=P^{i}(y_{k}^{i}+z_{k}^{i})$\\

c2: $y_{k+1}^{i}=\sum_{j=1}^{N}q_{ij}y_{k}^{j}+a_{k}(\bar{y}_{k}^{i}-y_{k}^{i})+a_{k}(h^{i}(y_{k}^{i}) + M^i_{k+1})$\\

4: \textbf{end for }\\

\caption{Distributed SA with Boyle-Dykstra-Han (DSA-BDH)}
\end{algorithm}

\begin{rem}\label{remark1}
The assumption of compact $\mathcal{X}_i$ is not necessary for Algorithm 2. As long as the iterates are assumed to be bounded (both $y_k$ and $z_k$), `compact sets' can be replaced by `closed sets'.\\
\end{rem}

\begin{rem}\label{remark1}
For the various terms involved in Algorithm 3, Assumption 2 continues to apply. The only addition is that Assumption 2(iii) now includes the condition (\ref{apndxb1}).\\
\end{rem}

\begin{rem}\label{remark1}
Both algorithms 2 and 3 have the same objective, i.e. distributed projected SA. For the rest of the paper we refer to Algorithm 2 as DSA-GD and Algorithm 3 as DSA-BDH.
\end{rem}

\subsection{Projected dynamical systems}

We analyze the algorithms  using the ODE (for `Ordinary Differential Equations') approach for analyzing stochastic approximation, extended to differential inclusions \cite{Benaim}. Define the normal cone $N_{\mathcal{X}}(x)$ to be the set
of outward normals at any point $x\in\partial\mathcal{X}$ where $\partial\mathcal{X}$ is the boundary of the set $\mathcal{X}$, i.e.,
\begin{equation}
N_{\mathcal{X}}(x)\doteq\{\gamma\in\mathbb{R}^{n}: \langle\gamma,x-y\rangle\geq0\, \ \forall\, \ y\in\mathcal{X}\} \label{cone}
\end{equation}
with $N_{\mathcal{X}}(x)=\{0\}$ for any point $x$ in the interior of $\mathcal{X}$. The relevant differential inclusion for our problem is
\begin{equation}
\dot{x}\in H(x(t))-N_{\mathcal{X}}(x(t)), \label{inclusion2}
\end{equation}
\[
x(t)\in\mathcal{X}\, \ \ \forall t\in[0,T]
\]
where $H(\cdot)$ is as in (\ref{H-def.}). This inclusion is identical to the well known "Projected Dynamical System" considered in \cite{Nagurney} :

\begin{equation*}
\dot{x}=\Pi(x,h(x))
\end{equation*}
where $\Pi(x,h(x))$ is defined to be the following limit for any $x\in\mathcal{X}$ :
\[
\Pi(x,h(x))\doteq\lim_{\delta\to0}\frac{P_{\mathcal{X}}(x+\delta h(x))-x}{\delta}
\]
The proof of the fact that the operator $\Pi(\cdot,\cdot)$ is identical to the RHS of (\ref{inclusion2}) is provided in (\cite{Dupuis}, Lemma 4.6).
The following theorem  is borrowed from (\cite{Brogliato}, Corollary 2) and (\cite{Cojocari}, Theorem 3.1).\\

\begin{thm} \label{BrogCoj}
For a convex $\mathcal{X}$, (\ref{inclusion2}) is  well posed, i.e. a unique solution exists.
\end{thm}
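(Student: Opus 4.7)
The plan is to leverage the facts that (i) $N_{\mathcal{X}}$ is a maximal monotone multifunction on $\mathbb{R}^{n}$, being the subdifferential of the indicator function of the closed convex set $\mathcal{X}$, and (ii) $H$ is Lipschitz by Assumption 2(i). Together these place the problem squarely within the classical theory of Lipschitz perturbations of maximal monotone differential inclusions (i.e., Moreau's sweeping process), for which existence and uniqueness are standard results in the spirit of Brezis.

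For uniqueness, I would take two absolutely continuous solutions $x_{1},x_{2}$ with the same initial datum, choose measurable selections $\gamma_{i}(t)\in N_{\mathcal{X}}(x_{i}(t))$ such that $\dot{x}_{i}(t)=H(x_{i}(t))-\gamma_{i}(t)$ a.e., and compute
\[
\tfrac{1}{2}\tfrac{d}{dt}\|x_{1}(t)-x_{2}(t)\|^{2}=\langle x_{1}-x_{2},\,H(x_{1})-H(x_{2})\rangle-\langle x_{1}-x_{2},\,\gamma_{1}-\gamma_{2}\rangle.
\]
Monotonicity of $N_{\mathcal{X}}$ makes the second term nonpositive, while the first is bounded above by $L\|x_{1}-x_{2}\|^{2}$ by Lipschitz continuity of $H$. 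Gronwall's inequality then forces $x_{1}\equiv x_{2}$.

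For existence, my approach would be a Moreau catching-up discretization: with step $h_{n}\downarrow 0$, set
\[
x_{k+1}^{(n)}=P_{\mathcal{X}}\bigl(x_{k}^{(n)}+h_{n}H(x_{k}^{(n)})\bigr),
\]
and interpolate to piecewise affine curves $x^{(n)}(\cdot)$. Compactness of $\mathcal{X}$ and boundedness of $H$ on $\mathcal{X}$ give uniform Lipschitz bounds, and Arzela-Ascoli extracts a subsequence converging uniformly to some absolutely continuous $x(\cdot)$ valued in $\mathcal{X}$. The variational characterization of projection onto a convex set yields $(x_{k}^{(n)}+h_{n}H(x_{k}^{(n)})-x_{k+1}^{(n)})/h_{n}\in N_{\mathcal{X}}(x_{k+1}^{(n)})$; I would then pass to the limit using the graph-closedness (outer semicontinuity) of the normal cone multifunction to identify $x(\cdot)$ as a solution of (\ref{inclusion2}).

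The main obstacle is in the existence step, specifically the identification of the limiting selection as an element of $N_{\mathcal{X}}(x(t))$: because $N_{\mathcal{X}}$ is unbounded at boundary points, one cannot simply take pointwise limits of the discrete selections, and the standard remedy is to test against fixed $y\in\mathcal{X}$ and exploit the monotonicity inequality $\langle\gamma_{i},x_{i}-y\rangle\geq 0$, which survives weak limits after integration in time. This closure argument is precisely what is carried out in the cited references, so rather than reproduce it I would simply invoke (\cite{Brogliato}, Corollary 2) and (\cite{Cojocari}, Theorem 3.1), after verifying that Assumption 1(i) ensures the hypotheses of both (convexity, closedness, and nonempty interior of $\mathcal{X}$, and Lipschitz regularity of $H$).
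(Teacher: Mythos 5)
Your proposal is correct, and in fact it supplies more than the paper does: the paper offers no argument for this theorem at all, merely the citation of (\cite{Brogliato}, Corollary 2) and (\cite{Cojocari}, Theorem 3.1). Your sketch is a faithful reconstruction of the standard machinery behind those references: $N_{\mathcal{X}}=\partial\delta_{\mathcal{X}}$ is maximal monotone, $H$ is Lipschitz by Assumption 2(i), uniqueness follows from the monotonicity inequality $\langle\gamma_{1}-\gamma_{2},x_{1}-x_{2}\rangle\geq 0$ plus Gronwall, and existence from the catching-up discretization with the closure of the normal-cone graph handled by testing against fixed points of $\mathcal{X}$. You also correctly flag the one genuine subtlety (the normal cone is unbounded on $\partial\mathcal{X}$, so the compact-convex-valued differential-inclusion theory does not apply directly and pointwise limits of the discrete selections are unavailable), and you resolve it the same way the cited works do. The only point worth adding is that the inclusion formulation and the projected dynamical system $\dot{x}=\Pi(x,H(x))$ agree because the maximal monotone theory produces the minimal-norm (``slow'') selection, which is exactly the Dupuis--Nagurney operator $\Pi$; this is the content of the Lemma 4.6 of \cite{Dupuis} that the paper quotes just above the theorem, so your argument is fully consistent with the paper's framework.
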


We recall the following notion  from \cite{Benaim} where more general differential
inclusions are considered:\\

\begin{defn}(\cite{Benaim})
Suppose $F$ is a closed set valued map such that $F(x)$ is a non-empty compact convex
set for each $x$. Then a perturbed solution
$y$ to the differential inclusion
\begin{equation}
\dot{x}\in F(x) \label{DifInc}
\end{equation}
is an absolutely continuous function which satisfies:\\

i) $\exists$ a locally integrable function $t\to U(t)$ such that
for any $T>0$, $$\lim_{t\to\infty}\sup_{0\leq v\leq T}\big|\int_{t}^{t+v}U(s)ds\big|=0,$$

ii) $\exists$ a function $\delta:[0,\infty)\to[0,\infty)$ with $\delta(t)\to0$ as $t \to \infty$
such that
\[
\dot{y}-U(t)\in F^{\delta(t)}(y).
\]

\end{defn}

There is no guarantee that the perturbed solution remains close to a
solution of (\ref{DifInc}),  however, the following assumption  helps in establishing some form of convergence. Let $\Lambda$ denote the equilibrium set of (\ref{inclusion2}), assumed to be non-empty. Then :

\textit{
\[
\Lambda \subset \{x\,:\,H(x)\in N_{\mathcal{X}}(x)\}.
\]
}
\\

\noindent \textbf{Assumption 3 :}\textit{ There exists a Lyapunov function for
the set $\Lambda$, i.e.,  a continuously differentiable function $V:\mathbb{R}^{n}\to\mathbb{R}$
such that any solution $x$ to (\ref{inclusion2}) satisfies
\[
V(x(t))\leq V(x(0)) \ \forall \\ t > 0
\]
 and the inequality is strict whenever $x(0)\notin \Lambda$.}\\

 If this assumption does not hold, the asymptotic behavior is a bit more complex. Specifically, under reasonable assumptions, an SA scheme converges a.s.\ to an invariant internally chain transitive set of the limiting o.d.e. This behavior extends to differential inclusions as well (see \cite{Benaim} or Chapter 5, \cite{BorkarBook}) and is what we would expect for our algorithm if we remove Assumption 3. \\

For instance if $H=-\nabla g$ for some continuously differentiable $g$, then
the above set represents the KKT points and the function $g$ itself
will serve as a Lyapunov function for the set $\Lambda$.\\

 The following result is from (\cite{Benaim}, Prop. 3.27) :
\begin{prop} \label{Perturbed}
Let $y$ be a bounded perturbed solution to (\ref{inclusion2}) and there exist a Lyapunov
function for a set $\Lambda$ with $V(\Lambda)$ having an empty interior.
Then
\[
\bigcap_{t\geq0}\overline{y\big([t,\infty)\big)}\subset\Lambda
\]
\end{prop}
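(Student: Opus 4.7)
The plan is to follow the standard Benaim-style argument for convergence of perturbed solutions of differential inclusions via Lyapunov functions. The central object is the $\omega$-limit set $L := \bigcap_{t\ge 0}\overline{y([t,\infty))}$; because $y$ is bounded, $L$ is automatically nonempty, compact, and connected. The first substantive step is to show $L$ is invariant under the inclusion (\ref{inclusion2}). Given $x^* \in L$, pick $t_n \to \infty$ with $y(t_n) \to x^*$ and form the translates $y_n(\cdot) := y(t_n + \cdot)$. Boundedness of $y$ together with the Lipschitz (hence linearly bounded) $H$ and the a.e.\ bounded normal-cone selection in $F(x) := H(x) - N_{\mathcal{X}}(x)$ make the $y_n$ equicontinuous on compact intervals; Arzela-Ascoli yields a subsequence converging uniformly on compacts to some absolutely continuous $z$ with $z(0) = x^*$. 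Combining the two clauses in the definition of perturbed solution, namely $\int_t^{t+v} U(s)\,ds \to 0$ uniformly in $v \in [0,T]$ and $\dot y - U \in F^{\delta(t)}(y)$ with $\delta(t)\downarrow 0$, together with the upper semicontinuity of $F$ (cf.\ Theorem \ref{BrogCoj} and the fact that the normal-cone map of a convex set is upper semicontinuous), one concludes that $\dot z \in F(z)$ a.e., so $z$ is a genuine solution of (\ref{inclusion2}). Since $z(s) = \lim_n y(t_n + s) \in L$ for every $s \ge 0$, the whole forward orbit of $z$ lies in $L$.

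Next I would show that $V$ is constant on $L$. Along any true solution of (\ref{inclusion2}), $V$ is nonincreasing by Assumption 3. Transferring this to the perturbed solution $y$: fix $T > 0$, write $y(t+T) - y(t) = \int_t^{t+T}(\dot y - U)(s)\,ds + \int_t^{t+T} U(s)\,ds$, and compare with a true solution of the inclusion started at $y(t)$. The integrated-perturbation bound and a Gronwall-type estimate force $V(y(t+T)) - V(y(t)) \le o(1)$ as $t \to \infty$. Since $V\circ y$ is bounded (by boundedness of $y$ and continuity of $V$), this implies $c := \lim_{t\to\infty} V(y(t))$ exists. Continuity of $V$ then yields $V \equiv c$ on $L$.

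Finally, suppose for contradiction that there exists $x^* \in L \setminus \Lambda$. By the invariance established in the first step, there is a solution $x(\cdot)$ of (\ref{inclusion2}) with $x(0) = x^*$ whose forward orbit remains in $L$. Assumption 3 gives $V(x(T)) < V(x^*) = c$ for some $T > 0$, contradicting $V \equiv c$ on $L$. Hence $L \subset \Lambda$. The hypothesis that $V(\Lambda)$ has empty interior is the safeguard that keeps the argument honest when the Lyapunov decrease is only weakly strict: it prevents the limiting value $c$ from being attained on a fat subset of $\Lambda$ that could absorb excursions and hide nontrivial limit points outside $\Lambda$. The main obstacle in the whole proof is the first step, namely establishing invariance of $L$ under the nonsmooth inclusion with its normal-cone right-hand side, since one must simultaneously pass to the limit in the perturbation, the set-valued $F^{\delta(t)}$, and the discontinuous normal cone.
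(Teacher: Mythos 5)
First, note that the paper does not prove this statement at all: Proposition \ref{Perturbed} is imported verbatim from Benaim--Hofbauer--Sorin (their Proposition 3.27), so there is no in-paper argument to compare against. Judged on its own terms, your sketch has a genuine gap in the second step. You claim that the comparison with true solutions gives $V(y(t+T))-V(y(t))\leq o(1)$ for each fixed $T$, and that this together with boundedness of $V\circ y$ forces $c:=\lim_{t\to\infty}V(y(t))$ to exist. That implication is false as a matter of real analysis: a bounded function $f$ can satisfy $f(t+T)-f(t)\to 0$ for every fixed $T$ and still oscillate forever (take $f(t)=\sin(\log t)$). One-sided window control of fixed length never yields convergence of $V(y(t))$, so "$V\equiv c$ on $L$" is not established, and the final contradiction collapses.

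The symptom of the problem is that your argument, as written, never actually uses the hypothesis that $V(\Lambda)$ has empty interior -- your closing paragraph treats it as a vague "safeguard," but if your steps were valid the proposition would hold without it, and it does not. The correct route (the one in \cite{Benaim}) is: (i) show the limit set $L$ of a bounded perturbed solution is not merely invariant but internally chain transitive; (ii) use the fact that an internally chain transitive set admits no proper attractor to show $V(L)\subset V(L\cap\Lambda)$ (roughly, if some value $v\in V(L)$ were bounded away from $V(L\cap\Lambda)$, the sublevel set $\{x\in L: V(x)<v\}$ would be a proper attractor inside $L$); (iii) observe that $V(L)$ is a compact interval by connectedness of $L$, yet is contained in the empty-interior set $V(\Lambda)$, hence is a singleton; (iv) conclude $L\subset\Lambda$ from the strict decrease of $V$ along solutions starting outside $\Lambda$. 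Your first step (invariance of $L$ via Arzela--Ascoli and upper semicontinuity of $F$) is sound in spirit and is indeed part of the standard proof, but invariance alone is too weak: the chain-transitivity structure and the empty-interior hypothesis are what actually make $V$ constant on $L$, and that is precisely the piece your argument replaces with an invalid limit claim.
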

\begin{rem}\label{remark1}
To prove the main result, we shall show that the suitably interpolated iterates generated
by the algorithm form a perturbed solution to the differential inclusion
(\ref{inclusion2}), so that using Theorem \ref{BrogCoj} and Proposition \ref{Perturbed} (along with Assumption
3), the algorithm is shown to converge to its
equilibrium set.\\

\end{rem}

To conclude this section, we provide some intuition behind the proposed algorithms. The SA part in DSA-GD is
 $$y_{k+1}^{i}=\sum_{j=1}^{N}q_{ij}y_{k}^{j}+a_{k}(z^i_k - y_{k}^{i})+a_{k}(h^{i}(y_{k}^{i}) + M^i_{k+1})$$

If the term $z^i_k$ asymptotically tracks the projection $P_\mathcal{X}(y^i_k)$ (see Lemma \ref{projlem1}), then this can be written as (modulo some asymptotically vanishing error)  :

 $$y_{k+1}^{i} \approx \sum_{j=1}^{N}q_{ij}y_{k}^{j}+a_{k}(h^{i}(y_{k}^{i}) + \underbrace{P_\mathcal{X}(y^i_k)-y_{k}^{i}}_{\text{projection error term}}+ M^i_{k+1})$$

The projection error term strives to keep the iterates inside the constraint set (compare this to the inclusion (\ref{inclusion2}) where the constraining term belongs to the normal cone). So although the above iteration may appear to behave like an unconstrained one, it achieves the same asymptotic behavior as what one would get by projecting onto the entire constraint set $\mathcal{X}$ at each step. An analogous intuition applies to DSA-BDH.

\section{Convergence proof}

We first deal with DSA-GD, the proof details for DSA-BDH are nearly the same and we give a brief outline in the second part of this section. Assumptions 1,2 and 3 are assumed to hold throughout this section.\\

\noindent \textit{A. Convergence of DSA-GD:} We rewrite some of the main steps in DSA-GD with a stacked
vector notation :
\begin{eqnarray}
z_{k+1} &=& \mathbf{P\big\{}(Q\otimes I_{N})z_{k}-b_{k}((Q\otimes I_{N})z_{k}-y_{k})\big\}, \label{medium} \\
y_{k+1} &=& (Q\varotimes I_{n})y_{k}+a_{k}(z_k-y_{k})+a_{k}(h(y_{k})+M_{k+1}). \label{slow}
\end{eqnarray}
Our main convergence result is :
\begin{thm}\label{main thrm}
Under Assumptions 1-3, we have almost surely
\[
y_{k}\to\{\mathbf{1\varotimes}y\,:\,y\in\Lambda\}\,\,\, .
\]
\end{thm}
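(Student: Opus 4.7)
The plan is a three-time-scale decomposition matching the three ingredients of the algorithm: (i) the per-step gossip operator $Q\otimes I_n$ that drives the $y$-iterates to the consensus subspace $\{\mathbf{1}\otimes y : y\in\mathbb{R}^n\}$; (ii) the $b_k$-scale distributed projection recursion (\ref{medium}) that makes $z_k$ track $P_\mathcal{X}(\langle y_k\rangle)$; and (iii) the $a_k$-scale SA drift along the consensus direction. The endgame is to realise the time-interpolation of the averages $\langle y_k\rangle$ as a perturbed solution (Definition~2) of the projected differential inclusion (\ref{inclusion2}), so that Theorem~\ref{BrogCoj}, Assumption~3 and Proposition~\ref{Perturbed} together force $\langle y_k\rangle\to\Lambda$ a.s., which combined with consensus yields the stacked statement.

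\textbf{Consensus and fast projection tracking.} First I would write $y_k = \mathbf{1}\otimes\langle y_k\rangle + y_k^\perp$; projecting (\ref{slow}) onto the orthogonal complement of $\mathbf{1}$ yields a recursion for $y_k^\perp$ with contraction factor $\gamma<1$ from (\ref{eq:-3}) and an $a_k$-order perturbation controlled by Assumption~2(ii),(iv), the Lipschitz property of $h^i$ and (\ref{mgbd}). Iterating gives $\|y_k^\perp\|\to 0$ almost surely, in fact $\|y_k^\perp\|=O(a_k)$. Next, since $a_k=o(b_k)$ the $y$-variables are quasi-static relative to the $z$-iterates; combining the two-time-scale tracking lemma of \cite{BorkarBook} with Lemma~\ref{projlem}, and using the consensus bound to absorb the $O(a_k)$ discrepancy between the nodes' ``targets'' $y_k^i$ that enter (\ref{medium}), one concludes $z_k^i - P_\mathcal{X}(\langle y_k\rangle)\to 0$ a.s.\ for each $i$. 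Pre-multiplying (\ref{slow}) by $\tfrac{1}{N}(\mathbf{1}^T\otimes I_n)$ and using the double stochasticity of $Q$ then leaves
\[
\langle y_{k+1}\rangle = \langle y_k\rangle + a_k\bigl(P_\mathcal{X}(\langle y_k\rangle)-\langle y_k\rangle\bigr) + a_k H(\langle y_k\rangle) + a_k\bigl(\langle M_{k+1}\rangle + \epsilon_k\bigr),
\]
with $\epsilon_k\to 0$ a.s.

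\textbf{Perturbed solution and main obstacle.} It remains to show that the piecewise-linear interpolation of $\langle y_k\rangle$ is a perturbed solution of (\ref{inclusion2}). The $a_k\langle M_{k+1}\rangle$ term goes into $U(\cdot)$ of Definition~2(i) and is handled by Assumption~2(ii) together with a Doob/Burkholder maximal-inequality argument on the sums $\sum a_j\langle M_{j+1}\rangle$. The $a_k\epsilon_k$ contribution is absorbed into the $\delta(t)$ of Definition~2(ii) using the bounds from Steps~1--2. The crux is then to identify the drift $H(x)+(P_\mathcal{X}(x)-x)$ with a selection from the set-valued map $H(x)-N_\mathcal{X}(x)$ in the $F^{\delta}$ sense, invoking (\cite{Dupuis}, Lemma~4.6). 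This is the hardest part: $P_\mathcal{X}$ is only Lipschitz and is non-smooth precisely on $\partial\mathcal{X}$, which is exactly the region where $N_\mathcal{X}$ is non-trivial, so unlike \cite{Mathkar} one cannot invoke Frechet differentiability, and verifying the $F^{\delta(t)}$ condition uniformly along sample paths that may repeatedly approach $\partial\mathcal{X}$ is exactly the ``significantly different proof'' flagged in the introduction. An additional technical obstacle is that the fast recursion (\ref{medium}) is itself a time-dependent, SA-like system whose effective target $\langle y_k\rangle$ is drifting, forcing the tracking lemma in Step~2 to be formulated with care, combining the consensus bound with the Lipschitz continuity of $x_0\mapsto P_\mathcal{X}(x_0)$. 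Once the perturbed-solution property is established, Proposition~\ref{Perturbed} and Assumption~3 yield $\langle y_k\rangle\to\Lambda$ a.s., and the consensus step promotes this to $y_k\to\{\mathbf{1}\otimes y : y\in\Lambda\}$.
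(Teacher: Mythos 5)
Your architecture matches the paper's: consensus for $y_k$, two--time--scale tracking so that $z_k^i - P_{\mathcal{X}}(\langle y_k\rangle)\to 0$, passage to the averaged recursion for $\langle y_k\rangle$, and then realisation of its interpolation as a perturbed solution of (\ref{inclusion2}) so that Proposition \ref{Perturbed} and Assumption 3 apply. However, there is a genuine gap at the step you yourself flag as ``the crux,'' and you have also misdiagnosed where the difficulty lies. Identifying the drift $P_{\mathcal{X}}(x)-x$ with an element of $-N_{\mathcal{X}}(\cdot)$ requires no smoothness of $P_{\mathcal{X}}$ at all: by the definition (\ref{cone}) of the normal cone one has, exactly, $\langle y_k\rangle - P_{\mathcal{X}}(\langle y_k\rangle)\in N_{\mathcal{X}}\bigl(P_{\mathcal{X}}(\langle y_k\rangle)\bigr)$, so the inclusion holds verbatim --- but at the \emph{projected} point $P_{\mathcal{X}}(\langle y_k\rangle)$, not at $\langle y_k\rangle$ itself. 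To place $\dot\Theta(t)-U(t)$ in $F^{\delta(t)}(\Theta(t))$ with $\delta(t)\to 0$, as Definition 2(ii) demands, you must therefore show that $\|\langle y_k\rangle - P_{\mathcal{X}}(\langle y_k\rangle)\|\to 0$, i.e.\ asymptotic feasibility of the averaged iterates. Your proposal nowhere proves this, and without it the enlargement parameter $\delta(t)$ does not vanish and Proposition \ref{Perturbed} cannot be invoked.

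This feasibility claim is the actual technical heart of the paper's argument. It is established there by isolating the ``fixed--point'' part of the recursion, $f_k(y)=(1-a_k)y+a_kP_{\mathcal{X}}(y)$, showing via the o.d.e.\ $\dot{\bar y}=P_{\mathcal{X}}(\bar y)-\bar y$ and LaSalle's invariance principle (with Lyapunov function $\tfrac12\|\bar y\|^2$ and non-expansivity of $P_{\mathcal{X}}$) that the unperturbed iteration converges to $\mathcal{X}$, and then transferring this to the actual iterates by an Arzela--Ascoli/equicontinuity argument for the compositions $f_{m+k}\circ\cdots\circ f_k$, which converge to $P_{\mathcal{X}}$ uniformly on compacts. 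Some such argument (or an equivalent one) must be supplied; it does not follow from the two--time--scale tracking of $z_k$ alone, since the latter only controls $\langle z_k\rangle - P_{\mathcal{X}}(\langle y_k\rangle)$ and not the distance of $\langle y_k\rangle$ itself from $\mathcal{X}$. A secondary, minor point: your claim $\|y_k^\perp\|=\mathcal{O}(a_k)$ is stronger than what the paper proves (and than what is needed); Lemma \ref{consensus} only yields $\|y_k^\perp\|\to 0$ via the $\mathcal{O}(\beta^{-k})+\mathcal{O}(\sum_i a_i)$ bound, and you should not rely on the sharper rate without proof.
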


The above theorem states that the variables $\{y_{k}^{i},\,i=1,...,N\}$
achieve consensus as expected and moreover, their limit points lie
in the equilibrium set $\Lambda$ of (\ref{inclusion2}). For distributed
optimization, this set corresponds to the KKT points of the related minimization problem.\\

\textit{Outline of the analysis}: To analyze the above algorithm,
 we proceed in three steps:
\begin{enumerate}
\item \textit{Consensus} : We first show that the iterates $y_{k}^{i}$ achieve consensus (Lemma \ref{consensus}). This will help us analyze the algorithm by studying it in
the average sense at each node, i.e., with $y_k^i \approx \langle y_{k}\rangle := 1/N\sum_{i=1}^{N}y_{k}^{i}$.
\item \textit{Feasibility } : Next, the condition $a_{k}=o(b_{k})$ is exploited to do a two time
scale analysis. Specifically, in the distributed projection scheme operating on a fast time scale, the variable $y_{k}$, evolving on a slow time scale,
is quasi-static, i.e., treated as a constant.
In turn the (slow) stochastic approximation step sees the faster iterations  (\ref{medium}) as quasi-equilibrated. Therefore it asymptotically
behaves as its projected version where the projection is taken to
be upon the entire set at each node rather than only its own particular
constraint set.
\item \textit{Convergence }: Finally, using the the above analysis, the slow iterates are shown to track the desired projected
dynamical system.
\end{enumerate}

We first show that the above algorithm achieves consensus, i.e.,
as $k\to\infty,\,\|y_{k}^{i} - \langle y_{k}\rangle\| \to 0.$\\

\begin{lem} \label{consensus}
$\lim_{k}\max_{i,j=1,..N}\|y_{k}^{i}-y_{k}^{j}\|=0\,\,\textrm{a.s}$.
Also,
\[
\|y_{k}^{i}-\langle y_{k}\rangle\|\to0\: \ \ \forall\:i.
\]
\end{lem}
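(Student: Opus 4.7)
The plan is to project the iteration (\ref{slow}) onto the disagreement subspace $\{u\in\mathbb R^{Nn}:(\mathbf 1^T\otimes I_n)u=0\}$ and exploit the spectral contraction $\|Q-Q^*\|\le\gamma<1$ guaranteed by Assumption 1(iii).

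First, I would rewrite (\ref{slow}) compactly as $y_{k+1}=(Q\otimes I_n)y_k+a_k\xi_k$ with $\xi_k:=z_k-y_k+h(y_k)+M_{k+1}$. Setting $\tilde y_k:=y_k-\mathbf 1\otimes\langle y_k\rangle$ and using $\bigl((Q-Q^*)\otimes I_n\bigr)(\mathbf 1\otimes v)=0$ for every $v\in\mathbb R^n$ (since $Q\mathbf 1=\mathbf 1$), together with double stochasticity $\langle y_{k+1}\rangle=\langle y_k\rangle+a_k\langle\xi_k\rangle$, one obtains the affine recursion
\[
\tilde y_{k+1}=\bigl((Q-Q^*)\otimes I_n\bigr)\tilde y_k+a_k\bigl(I-Q^*\otimes I_n\bigr)\xi_k.
\]

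Next, I would establish the relevant a priori bounds. Because (\ref{medium}) is the image of a projection onto the compact sets $\mathcal X_i$, $\sup_k\|z_k\|$ is bounded deterministically. Combined with $\sup_k\|y_k\|<\infty$ a.s.\ (Assumption 2(iv)) and the Lipschitz property of $h$, the ``drift'' $w_k:=(I-Q^*\otimes I_n)(z_k-y_k+h(y_k))$ is a.s.\ uniformly bounded by some constant $C$, while $M'_{k+1}:=(I-Q^*\otimes I_n)M_{k+1}$ is a martingale difference (with respect to the joint filtration $\mathcal F_k:=\sigma(y_\ell,z_\ell,\{M_\ell^i\}_i:\ell\le k)$) whose conditional second moment is a.s.\ bounded by a constant in view of Assumption 2(ii) and boundedness of $y_k$.

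Third, I would square the recursion, take conditional expectations (the $M'$-linear term vanishes), and apply Young's inequality $2\gamma a_k\|\tilde y_k\|\,\|w_k\|\le\epsilon\|\tilde y_k\|^2+(\gamma C a_k)^2/\epsilon$ to the cross term, yielding
\[
\mathbb E\bigl[\|\tilde y_{k+1}\|^2\bigm|\mathcal F_k\bigr]\le(\gamma^2+\epsilon)\|\tilde y_k\|^2+C_\epsilon a_k^2
\]
for arbitrary $\epsilon>0$. Choosing $\epsilon$ so small that $\rho:=\gamma^2+\epsilon<1$ and invoking the Robbins--Siegmund theorem with $U_k:=\|\tilde y_k\|^2$, $\beta_k:=(1-\rho)U_k$ and $\gamma_k:=C_\epsilon a_k^2$ (summable by Assumption 2(iii)) yields a.s.\ convergence of $U_k$ together with $\sum_k(1-\rho)U_k<\infty$, hence $\|\tilde y_k\|\to 0$ a.s. This gives $\|y_k^i-\langle y_k\rangle\|\le\|\tilde y_k\|\to 0$ for each $i$, and the first assertion follows by the triangle inequality.

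The main obstacle I anticipate is essentially bookkeeping: confirming that $(I-Q^*\otimes I_n)$ interacts cleanly with the Kronecker structure and the stacked noise, and neatly separating drift and martingale parts after invoking boundedness of $z_k$ (a feature specific to the present algorithm, inherited from compactness of the $\mathcal X_i$). Once these are in hand, the contractive-linear-recursion argument is routine.
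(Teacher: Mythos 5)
Your proof is correct, but it takes a genuinely different route from the paper's. The paper works pathwise and deterministically (given Assumption 2(iv)): it iterates $y_{k+1}=(Q\otimes I_n)y_k+a_kY_k$ over a block of length $k$ starting at time $n$, bounds the accumulated perturbation by $\mathcal{O}(\sum_{i=n}^{n+k-1}a_i)$ using only $a_i\to 0$ and boundedness of $Y_i$, and invokes the geometric decay $\|(Q^k-Q^{*})u\|\leq\kappa\beta^{-k}\|u\|$ to conclude that every limit point is a fixed point of $Q^{*}\otimes I_n$. You instead project onto the disagreement subspace, obtain the exact one-step contraction $\tilde y_{k+1}=((Q-Q^{*})\otimes I_n)\tilde y_k+a_k(I-Q^{*}\otimes I_n)\xi_k$ (which is correct, since $(Q-Q^{*})Q^{*}=0$), and close the argument with Robbins--Siegmund. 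Your version buys an implicit rate for $\|\tilde y_k\|$ and avoids the double-limit bookkeeping, but it pays in stochastic-analysis overhead that the paper's pathwise argument never incurs: (i) you need $M_{k+1}$ to be a martingale difference with respect to the \emph{joint} filtration $\sigma(y_\ell,z_\ell,M_\ell,\ell\leq k)$ for the cross term to vanish, whereas Assumption 2(ii) is stated only for the per-node filtrations $\mathcal{F}^i_k$ (a harmless but real strengthening); (ii) your constants $C$, $C_\epsilon$ are random (they involve $\sup_k\|y_k\|$), so to apply Robbins--Siegmund you should replace them by the $\mathcal{F}_k$-measurable majorants $a_k^2\,\mathrm{const}\cdot(1+\|y_k\|^2+\|z_k\|^2)$, which are a.s.\ summable by Assumption 2(iv); and (iii) your single-step contraction uses the operator-norm bound $\|Q-Q^{*}\|=\gamma<1$, which the paper asserts (and which holds for symmetric $Q$), whereas the paper's own proof only needs the weaker iterated bound (\ref{eq:-3}) with a constant $\kappa$ possibly exceeding $1$. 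With these points addressed, your argument is a valid, and in some respects more quantitative, alternative.
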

\begin{proof}
Set
\[
Y_{k}= z_k-y_{k}+h(y_{k})+M_{k+1},
\]
 so that we can write (\ref{slow}) as
\[
y_{k+1}=(Q\varotimes I_{n})y_{k}+a_{k}Y_{k}.
\]
We have, for a fixed $k\leq n$ and any large $n$,
\begin{eqnarray*}
y_{n+k} &=& (Q\varotimes I_{n})y_{n+k-1}+a_{n+k-1}Y_{n+k-1} \\
&=& (Q^{2}\varotimes I_{n})y_{n+k-2}+a_{n+k-2}(Q\varotimes I_{n})Y_{n+k-2} \\
&& + \ a_{n+k-1}Y_{n+k-1},
\end{eqnarray*}
where we have used the fact that $(Q\varotimes I_{n})(Q\varotimes I_{n})=(Q^{2}\varotimes I_{n})$.
Iterating the above equation further, we have
\[
y_{n+k}=(Q^{k}\varotimes I_{n})y_{n}+a_{n}(Q^{k-1}\varotimes I_{n})Y_{n}+\cdots,
\]

\[
\cdots +a_{n+k-2}(Q\varotimes I_{n})Y_{n+k-2}+a_{n+k-1}Y_{n+k-1}
\]
i.e.,
\begin{equation}
y_{n+k}=(Q^{k}\varotimes I_{n})y_{n}+\{\Gamma(Y_{n+k-1},...,Y_{n})\} \label{Gamma}
\end{equation}
where $\Gamma(\cdot)$ is some linear combination of its arguments. In view of Assumption 2(iv) ($y_k$ is bounded),
\[
\|Y_{k}\|\leq M \ \ \ \ \mbox{w.p.} \ 1
\]
for some \textit{random} $M < \infty$. So we have,

\begin{align*}
 \|\Gamma(Y_{n+k-1},...,Y_{n})\|  & =\|\sum_{i=n}^{n+k-1}a_i (Q^{n+k-1-i} \varotimes I_{n})Y_{i} \| \\
 & \leq M \big(\sum_{i=n}^{n+k-1}a_i \big)\qquad \big( \because \|(Q^{n+k-1-i} \varotimes I_{n})\|=1 \big)\\
 & = \mathcal{O}(\sum_{i=n}^{n+k-1}a_i)
\end{align*}

Subtracting $(Q^{*}\varotimes I_{n})y_{n}$ from both sides in
(\ref{Gamma}), we get
\begin{equation}
y_{n+k}-(Q^{*}\varotimes I_{n})y_{n}=[(Q^{k}-Q^{*})\varotimes I_{n}]y_{n}+\{\Gamma(Y_{n+k-1},...,Y_{n})\} \label{difference}
\end{equation}

Using equation (\ref{eq:-3}) and taking norms in (\ref{difference}), we have :
\[
\|y_{n+k}-(Q^{*}\varotimes I_{n})y_{n}\|=\mathcal{O}(\beta^{-k})+\mathcal{O}(\sum_{i=n}^{n+k-1}a_i).
\]
Letting $n \to \infty$ followed by $k \to \infty$, it follows that  any limit
point $y_{*}$ of the sequence $\{y_{k}\}$ satisfies
\begin{equation}
y_{*}=(Q^{*}\varotimes I_{n})y_{*}.
\end{equation}
That is, $y_{*}^{i}=\frac{1}{N}\sum_{j=1}^{N}y_{*}^{j}$ for
any $i$, so that consensus is achieved and the consensus value is the average
of all the node estimates.
\end{proof}

We next argue that the algorithm
can be regarded as a two time scale iteration so that while analyzing the
behavior of $z_{k}$ (fast  variable), $y_{k}$ (slow  variable) can be regarded as a constant (cf. \cite{Borkar2time} or \cite{BorkarBook}, Chapter 6). We have
\begin{eqnarray}
z_{k+1} & =& \mathbf{P\big\{}(Q\otimes I_{N})z_{k}-b_{k}((Q\otimes I_{N})z_{k}-y_{k})\big\}\nonumber \\
 & =& \mathbf{P\big\{}Q\otimes I_{N})z_{k}+b_{k}(\mu(z_{k},y_{k}))\big\} \label{fast2}
\end{eqnarray}
for a suitably defined $\mu$. The slow time scale iteration here is:
\begin{eqnarray}
y_{k+1} & =& (Q\varotimes I_{N})y_{k}+a_{k}(z_{k}-y_{k}+h(y_{k})+M_{k+1}) \nonumber \\
 & =& (Q\varotimes I_{N})y_{k}+a_{k}(\nu(z_{k},y_{k}) + M_{k+1}) \label{slow2}
\end{eqnarray}
for a suitably defined $\nu$. Since $a_{k}=o(b_{k})$ in (\ref{fast2})-(\ref{slow2}), the above pair of equations form a two time scale iteration (\cite{Borkar2time} or \cite{BorkarBook}, Chapter 6). So while analyzing (\ref{fast2}), we can assume that $y_{k}$ is $\approx$ a constant, say $\langle \mathbf{y} \rangle := [y,....,y]$ (we  take the same value at all the nodes because of consensus proved in Lemma \ref{consensus}), so that (\ref{fast2}) becomes :
\begin{equation}\label{neweq}
z_{k+1} = \mathbf{P\big\{}(Q\otimes I_{N})z_{k}-b_{k}((Q\otimes I_{N})z_{k}- \langle\mathbf{ y} \rangle)\big\}.
\end{equation}
This can be viewed as iteration (\ref{graddesc}) for the problem (\ref{optproj}) with $x_0=\langle\mathbf{ y} \rangle$. So Lemma \ref{projlem} implies $z_k^i\to P_{\mathcal{X}}(\langle y \rangle)$.

\begin{lem}\label{projlem1}
If $z^{*}(\langle \mathbf{y} \rangle)$ is the limit point of (\ref{neweq}) for any $\langle \mathbf{y} \rangle$, then $z^{*}(\langle \mathbf{y} \rangle )= P_{\mathcal{X}}(\langle \mathbf{y} \rangle)$. Also for all $i$,
\begin{equation*}
\|z_k^i - P_{\mathcal{X}} ( \langle y_k \rangle)\| \to 0.
\end{equation*}

\end{lem}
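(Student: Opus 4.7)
My plan is to prove the two assertions in turn, leveraging the fact that the fast iteration with $\langle \mathbf{y}\rangle$ frozen is exactly the distributed gradient-descent projection scheme already analyzed.

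For the first assertion, I would observe that equation (\ref{neweq}), viewed as a recursion in $z_k$ alone with $\langle \mathbf{y}\rangle$ held constant, is structurally identical to the distributed gradient descent (\ref{graddesc}) with initial point $x_0 = \langle \mathbf{y}\rangle$ (component-wise, node $i$ runs $z_{k+1}^i = P^i\{\sum_j q_{ij} z_k^j - b_k(\sum_j q_{ij} z_k^j - \langle y\rangle)\}$). Hence Lemma \ref{projlem} applies directly and yields $z_k^i \to P_{\mathcal{X}}(\langle \mathbf{y}\rangle)$ for every $i$, so $z^*(\langle \mathbf{y}\rangle) = \mathbf{1}\otimes P_{\mathcal{X}}(\langle \mathbf{y}\rangle)$.

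For the second assertion, I would invoke the standard two-time-scale framework (Chapter 6 of \cite{BorkarBook}). Since $a_k = o(b_k)$, the $z$-iteration evolves on the fast scale while $y_k$ changes quasi-statically. The relevant theorem states that if (i) the frozen fast recursion has a unique globally asymptotically stable equilibrium $z^*(y)$ depending continuously on $y$, (ii) all iterates are bounded, and (iii) the usual Lipschitz/noise regularity conditions hold, then $\|z_k - z^*(y_k)\| \to 0$ a.s. I would verify each hypothesis: boundedness of $z_k$ is immediate because $z_{k+1}^i \in \mathcal{X}_i$ and each $\mathcal{X}_i$ is compact by Assumption 1(i), while $y_k$ is bounded by Assumption 2(iv); global asymptotic stability of the frozen attractor and uniqueness of $z^*(y)$ are exactly the content of the first assertion combined with strong convexity of (\ref{optproj}); and continuity of $y \mapsto P_{\mathcal{X}}(\langle y\rangle)$ follows from non-expansiveness of the projection onto a convex set. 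Combining this with Lemma \ref{consensus}, so that $\|y_k^i - \langle y_k\rangle\|\to 0$, yields $\|z_k^i - P_{\mathcal{X}}(\langle y_k\rangle)\| \to 0$.

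The main obstacle is that the fast recursion (\ref{medium}) is not in the textbook Robbins–Monro form $z_{k+1} = z_k + b_k G(z_k, y_k) + $ noise, because of the outer projection $\mathbf{P}$ and the unit-step averaging factor $(Q\otimes I_N)$. Rather than re-deriving a bespoke two-time-scale theorem, I would exploit the fact that the frozen analysis of Lemma \ref{projlem} (borrowed from \cite{Lee}) already absorbs these features, and argue that the only new element introduced by unfreezing $y$ is an additional perturbation of magnitude $\mathcal{O}(\|y_{k+1}-y_k\|) = \mathcal{O}(a_k) = o(b_k)$ in the effective increment of $z_k$; since this perturbation is summable-in-comparison-to-$b_k$, it can be absorbed as an asymptotically negligible error in the ODE approximation of the fast recursion, preserving the same limit. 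Thus the $z$-iteration tracks $P_{\mathcal{X}}(\langle y_k\rangle)$, which is the desired conclusion.
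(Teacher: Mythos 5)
Your proposal follows essentially the same route as the paper: the first assertion is obtained by identifying the frozen iteration (\ref{neweq}) with the distributed gradient descent (\ref{graddesc}) and invoking Lemma \ref{projlem}, and the second by appealing to the standard two-time-scale tracking result (Lemma 1, Chapter 6 of \cite{BorkarBook}). Your additional care in spelling out the hypotheses of that lemma and in flagging that the fast recursion is not in textbook Robbins--Monro form goes somewhat beyond the paper's one-line citation, but the argument is the same.
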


\begin{proof}
This first statement follows directly from the above discussion. The second is a direct consequence of Lemma 1, Chapter 6, \cite{BorkarBook} .
\end{proof}

We now prove the main result whose proof uses the
techniques of \cite{Benaim}, \cite{Bianchi}.\\

\noindent \textbf{Proof of Theorem \ref{main thrm} :}

\begin{proof}
The consensus part was already proved in Lemma \ref{consensus}. Multiplying both sides of (\ref{slow})  by $\frac{1}{N}(\mathbf{1}^{T}\otimes I_{n}$),
and using the double stochasticity of $Q$, we have
\[
\langle y_{k+1}\rangle=(1-a_{k})\langle y_{k}\rangle+a_{k}\langle z_k \rangle+a_{k}(\langle h(y_{k})\rangle+\langle M_{k+1}\rangle)
\]

We used in the above the fact
\begin{align*}
\frac{1}{N}(\mathbf{1}^{T}\otimes I_{n})(Q\otimes I_{n})y_{k} & =\frac{1}{N}(\mathbf{1}^{T}\otimes I_{n})y_{k}\\
 & =\langle y_{k}\rangle=\frac{1}{N}(y_{k}^{1}+....+y_{k}^{N}).
\end{align*}
Adding and subtracting $a_{k}P_{\mathcal{X}}(\langle y_{k}\rangle)$ on the right hand side,
\begin{equation}
\langle y_{k+1}\rangle=(1-a_{k})\langle y_{k}\rangle+a_{k}P_{\mathcal{X}}(\langle y_{k}\rangle)+a_{k}(\langle z_k \rangle-P_{\mathcal{X}}(\langle y_{k}\rangle)+\langle h(y_{k})\rangle+\langle M_{k+1}\rangle) \label{av-y}
\end{equation}

Note that,

$$\langle y_k \rangle -P_{\mathcal{X}}(\langle y_k \rangle ) \in \mathcal{N}_{\mathcal{X}}( P_{\mathcal{X}}(\langle y_{k}\rangle)) $$

Set $Z_{k}=\langle z_k \rangle -P_{\mathcal{X}}(\langle y_{k}\rangle)$  and $p_k = \langle h(y_{k})\rangle-H(P_{\mathcal{X}}(\langle y_{k}\rangle))$. So (\ref{av-y}) becomes :

\begin{equation}
\langle y_{k+1}\rangle \in  \langle y_{k}\rangle+a_{k}\Big( H(P_{\mathcal{X}}(\langle y_k \rangle )) - \mathcal{N}_{\mathcal{X}}( P_{\mathcal{X}}(\langle y_{k}\rangle)) + p_k + Z_k+\langle M_{k+1}\rangle \Big) \label{av-y-2}
\end{equation}\\

Let $t_{0}=0$ and $t_{k}=\sum_{i=0}^{k}a_{i}$ for any $k\geq1$, so that
 $t_{k}-t_{k-1}=a_{k-1}$. Define the interpolated trajectory $\Theta:[0,\infty)\to\mathbb{R}^{n}$
as :
\[
\Theta(t)=\langle y_{k}\rangle+(t-t_{k})\,\frac{\langle y_{k+1}\rangle-\langle y_{k}\rangle}{t_{k+1}-t_{k}}, \ t \in [t_{k}, t_{k+1}], \ k \geq 1
\]
By differentiating the above  we have

\begin{align*}
\frac{d\Theta(t)}{dt}= & \frac{\langle y_{k+1}\rangle-\langle y_{k}\rangle}{a_{k}}\,\,\,\,\forall\,t\in[t_{k},t_{k+1}]
\end{align*}
where we use the right, resp., left  derivative at the end points. Now define the following set valued map
\begin{equation}
F(x)=\{H(x)-W\,:\,W\in\mathcal{N}(x)\,, \|W\| \leq K\} \label{eff}
\end{equation}
where $0 < K < \infty$ is a suitable constant.\\

We get from (\ref{av-y-2}) using (\ref{eff}),
\begin{equation}
\frac{d\Theta(t)}{dt} \in F(P_{\mathcal{X}}(\langle y_{k}\rangle))+p_{k}+Z_{k}+\langle M_{k+1}\rangle, \ t \in [t_k, t_{k+1}]. \label{theta1}
\end{equation}

To finish the proof, $\Theta(\cdot)$ is first shown to be a perturbed
solution of (\ref{inclusion2}). Let $\eta(t)\doteq\|p_{k}\|+\|\Theta(t)-P_{\mathcal{X}}(\langle y_{k}\rangle)\|,\,t\in[t_{k},t_{k+1}), k \geq 0$.
Also define
$$U(t)= U_k := Z_k + \langle M_{k+1}\rangle,  $$
 for $t \in [t_k, t_{k+1}), k \geq 0$.
Then we have
\begin{equation}
\frac{d\Theta(t)}{dt}-U(t)\in F^{\eta(t)}(\Theta(t)). \label{incl}
\end{equation}
We have used the following fact in the above : for any set valued
map $F(\cdot)$ we have
\[
\forall\,(x,\hat{x})\ \in\mathbb{R}^{n}\times\mathbb{R}^{n},\;\!\ p+F(x)\subset F^{\|p\|+\|x-\hat{x}\|}(\hat{x}).
\]
If we show that $\sum_{k}a_{k}U_k<\infty$ and $\eta(t)\to0$, then by (\ref{incl})  $\Theta(\cdot)$ can be interpreted as a perturbed solution of the  differential inclusion
$$\dot{\Psi}(t) \in F(\Psi(t)).$$
Convergence to the
set $\Lambda$  then follows by Assumption 3 and the proof is complete.\\

We first prove that $\sum_{k}a_{k}U_k<\infty$, implying :
$$\lim_{t\to\infty}\sup_{0\leq v\leq T}\big|\int_{t}^{t+v}U(s)ds\big|=0.$$
for any $T>0$. We have
\[
U(t)=a_{k}\big[\underbrace{Z_{k}}_{I}+\underbrace{M_{k+1}}_{II}\big]
\]
for $t \in [t_k, t_{k+1})$. We consider the contributions of the two terms separately.\\

I:  We know from Lemma \ref{projlem1} that the mapping $ y \to P_{\mathcal{X}}(y)$ maps $y$ to the $y$-dependent limit point of (\ref{neweq}).
Then we have :
\begin{align*}
\|\langle z_{k} \rangle-P_{\mathcal{X}}(\langle y_{k}\rangle)\| &  =\|\langle z_{k} \rangle -\langle P_{\mathcal{X}}(\langle y_{k}\rangle) \rangle \|\,\,\,\,\,\big[\mbox{because} \ \ P_{\mathcal{X}}(\langle y_{k}\rangle)=\langle P_{\mathcal{X}}(\langle y_{k}\rangle) \rangle  ]\\
 & = \|\frac{1}{N}\sum_{i=1}^{N}\big( z^i_k -  P_{\mathcal{X}}(\langle y_{k}\rangle)    \big) \|\\
 & \leq\frac{1}{N}\sum_{i=1}^{N}\|z^i_{k}-P_{\mathcal{X}}(\langle y_{k}\rangle)\|\,\,\,\,\big[\text{Jensen's Inequality}\big]\\
 & \to0\,\,\,\,\,\,\big[\text{from Lemma \ref{projlem1}}\big].
\end{align*}
For $T > 0$, let $m(n) := \min\{k \geq n : \sum_{j=n}^{n+k}a_j \geq T\}, n \geq 0$. Then
\[
\sup_{\ell \leq m(n)}\sum_{k=n}^{\ell}a_{k}\|\langle z_k \rangle-P_{\mathcal{X}}(\langle y_{k}\rangle)\|\to0\,\,\text{as }n\to\infty\text{}.
\]

II: This term is the error induced by the noise. Note that the process $\sum_{m=0}^{k-1}a_mM^i_{m+1}, k \geq 1,$ is a zero mean square integrable martingale w.r.t.\ the increasing $\sigma$-fields  $\mathcal{F}^i_k := \sigma(M^i_m, y^i_m , m \leq k), k \geq 1$, with   $\sum_{m=0}^{\infty}a_m^2\mathbb{E}\left[\|M^i_{m+1}\|^2 | \mathcal{F}^i_m\right] < \infty$ by (\ref{mgbd}) and (\ref{y-stable}), along with the square-summability of $\{a_m\}$.  It follows from  the martingale
convergence theorem (Appendix C, \cite{BorkarBook}), that this martingale converges a.s. Therefore
\[
\sup_{\ell \leq m(n)}\|\sum_{k=n}^{\ell}a_{k}M^i_{k+1}\|\to0\,\,\,\text{a.s. }\forall i
\]

The claim follows for $\langle M_{k+1}\rangle$.\\

To prove that $\eta(t) \to 0$, consider for $t \in [t_k, t_{k+1}),\,k\geq 0$.

\begin{align*}
\eta(t) =\|p_k\|+\|\Theta(t)-P_{\mathcal{X}}(\langle y_{k}\rangle)\| & = \| \langle h(y_{k})\rangle-H(P_{\mathcal{X}}(\langle y_{k}\rangle))\| +\|\langle y_{k}\rangle - P_{\mathcal{X}}(\langle y_{k}\rangle)  \\
& \qquad \qquad \qquad + (\langle y_{k+1}\rangle-\langle y_{k}\rangle)\Big(\frac{t-t_{k}}{t_{k+1}-t_{k}}\Big) \| \\
& \leq\frac{1}{N}\sum_{i=1}^{N}\|h^{i}(y_{k}^{i})-h^{i}(P_{\mathcal{X}}(\langle y_{k}\rangle))\|+ \|\langle y_{k}\rangle - P_{\mathcal{X}}(\langle y_{k}\rangle)\| \\
 & \qquad \qquad \qquad + \|\langle y_{k+1}\rangle-\langle y_{k}\rangle\|\frac{\|t-t_{k}\|}{\|t_{k+1}-t_{k}\|}\\
& \leq\frac{C}{N}\sum_{i=1}^{N}\|y_{k}^{i}-P_{\mathcal{X}}(\langle y_{k}\rangle)\| +  \|\langle y_{k}\rangle - P_{\mathcal{X}}(\langle y_{k}\rangle)\| + \mathcal{O}(a_k) \\
\end{align*}
where $C > 0$ is a common Lipschitz constant for the $h^i$'s. Note that if we prove that $\| P_{\mathcal{X}}(\langle y_{k}\rangle)-\langle y_{k}\rangle\|\to 0$, all the terms in the above inequality go to zero as $k\uparrow \infty$. So to finish the proof, we prove this fact :\\

\noindent \textit{Claim :}
$\lim_k \inf_{x\in\mathcal{X}}\|y_{k}^{i}-x\| = 0\, \ \ \ \forall\,i,\,\text{as }k \uparrow \infty$\\

\noindent \textit{Proof.}  Let us first consider the following fixed point iteration :
\begin{displaymath}
\tilde{y}_{k+1}=(1-a_{k})\tilde{y}_{k}+a_{k}P_{\mathcal{X}}(\tilde{y}_{k}).
\end{displaymath}
By the arguments of \cite{BorkarBook}, Chapter 2, this has the same asymptotic behavior as the o.d.e.
\[
\dot{\bar{y}}=P_{\mathcal{X}}(\bar{y})-\bar{y}.
\]
Consider the Lyapunov function $V(\bar{y})=\frac{1}{2}\|\bar{y}\|^{2}$.
Then
\begin{align*}
\frac{d}{dt}V(\bar{y}(t)) & =\bar{y}(t){}^{T}(P_{\mathcal{\mathcal{X}}}(\bar{y}(t))-\bar{y}(t))\\
 & =\bar{y}(t){}^{T}P_{\mathcal{X}}(\bar{y}(t))-\|\bar{y}(t)\|^{2}.
\end{align*}
For any $v\in\mathbb{R}^{n}$,
the non-expansive property of $P_{\mathcal{X}}$ leads to
\[
\|P_{\mathcal{\mathcal{X}}}(v)\|\leq\|v\|.
\]
By the Cauchy-Schwartz inequality,
\[
\frac{d}{dt}V(\bar{y}(t))\leq 0.
\]
By Lasalle's invariance principle we have any trajectory $\bar{y}(\cdot)$
converge to the largest invariant set where $\frac{d}{dt}V(\bar{y}(t))=0$,
which is precisely the set $\mathcal{X}$. The claim for $\{\tilde{y}_k\}$ now follows by a standard argument as in Lemma 1 and Theorem 2, pp.\ 12-16, \cite{BorkarBook}.\\

Next, define maps $f_{k}$ by
\[
f_{k}(y)=(1-a_{k})y+a_{k}P_{\mathcal{X}}(y).
\]
Then (\ref{av-y}) becomes
\begin{equation}
\langle y_{k+1} \rangle= f_{k}(\langle y_{k} \rangle)+a_{k}(\langle h(y_{k})\rangle + M_{k+1}+\epsilon_k). \label{eq:-1-1}
\end{equation}
with $\epsilon_k = \langle z_k \rangle-P_{\mathcal{X}}(\langle y_{k} \rangle)$. From the preceding discussion, we have the following: for any fixed $k \geq 0$,
\begin{equation}\label{f-convg}
\lim_{m \uparrow \infty}F_{k,m}(\cdot) \ \Big(:= f_{m+k}\circ\cdot\cdot\circ f_{k}(\cdot)\Big) \to P_\mathcal{X}(\cdot).
\end{equation}
The  family $\{F_{k,m}\}$ of functions is non-expansive, therefore equi-continuous, and bounded (because $\{y_k\}$ is assumed to be bounded), hence relatively sequentially compact in $C(\mathbb{R}^n)$ by the Arzela-Ascoli theorem. Hence the above convergence is uniform on compacts, uniformly in $k$. Consider any convergent subsequence of $\{y_k\}$ with limit (say) $y^*$ and by abuse of notation, index it by $\{k\}$ again. From (\ref{eq:-1-1}), for any $k,m$ we have,
\begin{equation}
\langle y_{m+k+1} \rangle= f_{m+k}(\langle y_{m+k} \rangle)+a_{m+k}(\langle h(y_{m+k})\rangle + M_{m+k+1}+\epsilon_{m+k}). \label{bracketeq}
\end{equation}
We also have,
\begin{align*}
\| f_{m+k}(\langle y_{m+k} \rangle) -f_{m+k}\circ f_{m+k-1}(\langle y_{m+k-1} \rangle)\| & \leq \| \langle y_{m+k} \rangle - f_{m+k-1}(\langle y_{m+k-1} \rangle) \| \\
 & \qquad \qquad (\because \, f_{m+k} \text{ is non-expansive.})\\
 & = \ \| \langle y_{m+k} \rangle  - \langle y_{m+k-1} \rangle +\\
 & \qquad \qquad a_{m+k-1} \big( \langle y_{m+k-1} \rangle - P_\mathcal{X}(\langle y_{m+k-1} \rangle) \big) \| \\
 & = \mathcal{O}(a_{m+k-1}) \\
 & = o(1).
\end{align*}
 By iterating, we get :
$$\|f_{m+k}(\langle y_{m+k} \rangle) -  f_{m+k}\circ\cdot\cdot\circ f_{k}( \langle y_{k} \rangle )\| = o(1).$$
Combining this with (\ref{bracketeq}), we have
$$\|\langle y_{m+k+1}\rangle -  f_{m+k}\circ\cdot\cdot\circ f_{k}( \langle y_{k} \rangle )\| = o(1).$$
Let $\epsilon > 0$ and pick $m$ large enough so that
$$\|\langle y_{m+k+1} \rangle -  f_{m+k}\circ\cdot\cdot\circ f_{k}( \langle y_{k} \rangle )\| < \frac{\epsilon}{3}$$
and from (\ref{f-convg}),
$$\|f_{m+k}\circ\cdot\cdot\circ f_{k}(y) - P_\mathcal{X}(y)\| < \frac{\epsilon}{3}$$
for all $y \in$ a closed ball containing $\{\langle y_k \rangle\}$ along the chosen subsequence. Along the same subsequence, choose $k$ large enough so that
$$\|\langle y_k \rangle - \langle y^* \rangle \| < \frac{\epsilon}{3}.$$
Combining and using non-expansivity of projection, we have

\begin{align*}
\|\langle y_{m+k+1} \rangle - P_{\mathcal{X}}(\langle y^* \rangle) \| & \leq \|\langle y_{m+k+1} \rangle - f_{m+k}\circ\cdot\cdot\circ f_{k}( \langle y_{k} \rangle )\|+\| f_{m+k}\circ\cdot\cdot\circ f_{k}( \langle y_{k} \rangle )- P_{\mathcal{X}}(\langle y_k\rangle)\|\\
& \qquad  \qquad  \qquad  \qquad + \ \|P_{\mathcal{X}}(\langle y_k \rangle) - P_{\mathcal{X}}(\langle y^* \rangle) \|\\
& <\frac{\epsilon}{3}+ \frac{\epsilon}{3}+\frac{\epsilon}{3} =\epsilon
\end{align*}
The claim follows.


\end{proof}

\noindent \textit{B. Convergence of DSA-BDH }: The convergence analysis is similar to the previous case and Theorem \ref{main thrm} also holds for DSA-BDH. With a stacked
vector notation, the main steps of the algorithm are :
\begin{eqnarray}
x_{k} &=& (Q\varotimes I_{n})\{x_{k-1}+\mathbf{P}(z_{k}+y_{k})\}-\mathbf{P}(z_{k}+y_{k}), \label{fast3} \\
z_{k+1} &=& z_{k}+b_{k}x_{k}, \label{medium3} \\
y_{k+1} &=& (Q\varotimes I_{n})y_{k}+a_{k}(\mathbf{P}(y_{k}+z_{k})-y_{k})+a_{k}(h(y_{k})+M_{k+1}). \label{slow3}
\end{eqnarray}

The consensus part for $ y_k $ is proved along the same lines as Lemma \ref{consensus}. All that is required to show is that $\| \mathbf{P}(y_{k}+z_{k})-P_{\mathcal{X}} (y_{k}) \| \to 0 $, then the proof of Theorem \ref{main thrm} goes through with  only slight modification. (Here $\mathbf{P}(y_{k}+z_{k})$ performs the same job as $z_k$ of DSA-GD, i.e., asymptotically track the projection of $y_k$.) We proceed to show this now with the help of the following lemma proved in the appendix.

\begin{lem} \label{xbound}
$\|x_{k}\|$ is bounded and
\[
\|x_{k}-\{(Q^{*}\varotimes I_{n})\mathbf{P}(z_{k}+y_k)-\mathbf{P}(z_{k}+y_k)\}\|\to 0.
\]

\end{lem}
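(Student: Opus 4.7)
Set $v_k := \mathbf{P}(z_k+y_k)$. Since each $\mathcal{X}_i$ is compact, $\{v_k\}$ is uniformly bounded by a deterministic constant, say $\|v_k\|\leq B$ for all $k$. The plan is to solve the linear recursion (\ref{fast3}) explicitly for $x_k$, apply summation by parts to extract the target expression $(Q^{*}\otimes I_n)v_k - v_k$, and bound the remainder using the spectral estimate (\ref{eq:-3}).

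With the initial condition $x_0 = 0$, iterating (\ref{fast3}) yields
\[
x_k \;=\; \sum_{j=1}^{k} \bigl((Q^{k-j+1} - Q^{k-j}) \otimes I_n\bigr)\, v_j.
\]
Writing $v_j = \sum_{i=1}^{j} c_i$ with increments $c_i := v_i - v_{i-1}$ and $v_0 := 0$, the telescoping identity $\sum_{j=i}^{k}(Q^{k-j+1} - Q^{k-j}) = Q^{k-i+1} - I$ reorganises the double sum as $x_k = \sum_{i=1}^{k}\bigl((Q^{k-i+1} - I)\otimes I_n\bigr) c_i$. Expanding the target the same way produces the clean identity
\[
x_k - \bigl((Q^{*}\otimes I_n)v_k - v_k\bigr) \;=\; \sum_{i=1}^{k} \bigl((Q^{k-i+1} - Q^{*}) \otimes I_n\bigr)\, c_i,
\]
and (\ref{eq:-3}) supplies $\|(Q^{k-i+1}-Q^{*})\otimes I_n\| \leq \kappa\beta^{-(k-i+1)}$. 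Boundedness of $\{x_k\}$ is then immediate from the crude estimate $\|c_i\|\leq 2B$ and the geometric bound $\sum_{i=1}^{k}\kappa\beta^{-(k-i+1)}\leq \kappa/(\beta-1)$, combined with $\|((Q^{*}-I)\otimes I_n)v_k\|\leq 2B$. For the vanishing claim, the standard fact that a geometric kernel convolved with any null sequence tends to zero reduces everything to proving $\|c_i\|\to 0$ almost surely.

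This last reduction is the main technical step. By non-expansivity of $\mathbf{P}$,
\[
\|c_i\| \;\leq\; \|z_i - z_{i-1}\| + \|y_i - y_{i-1}\|.
\]
The first term equals $b_{i-1}\|x_{i-1}\| = \mathcal{O}(b_{i-1})$, using the boundedness of $\{x_k\}$ just obtained. For the second, (\ref{slow3}) decomposes $y_i - y_{i-1}$ into a consensus term $((Q-I)\otimes I_n)y_{i-1}$ (vanishing a.s.\ by the DSA-BDH analogue of Lemma \ref{consensus}), a deterministic $a_{i-1}$-weighted piece in $v_{i-1}, y_{i-1}, h(y_{i-1})$ (vanishing since these are bounded and $a_{i-1}\to 0$, with $h$ Lipschitz), and a noise contribution $a_{i-1}M_i$ whose a.s.\ decay follows from (\ref{mgbd}), (\ref{y-stable}) and $\sum_i a_{i-1}^2<\infty$ via Markov's inequality and Borel--Cantelli, exactly as in Part~II of the convergence proof for DSA-GD.

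The main obstacle is precisely marshalling these three qualitatively different inputs — deterministic consensus, pathwise stability of the slow iterates, and martingale-noise control — into the single statement $\|c_i\|\to 0$ a.s. Once secured, the geometric-convolution bookkeeping finishes both the boundedness and the convergence assertions in the lemma.
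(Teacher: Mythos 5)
Your argument is correct and is essentially the paper's own proof of Lemma \ref{xbound}: solve the linear recursion for $x_k$ explicitly, telescope (Abel summation) to isolate $(Q^{*}\otimes I_n)\mathbf{P}(r_k)-\mathbf{P}(r_k)$, control the remainder with the spectral estimate (\ref{eq:-3}), and reduce the vanishing of that remainder to the increments $\|r_i-r_{i-1}\|$ via non-expansivity of $\mathbf{P}$, splitting those increments into the $\mathcal{O}(b_{i-1})$ term from (\ref{medium3}), the consensus term $((Q-I)\otimes I_n)y_{i-1}$, and the $\mathcal{O}(a_{i-1})$ drift-plus-noise term. The one genuine point of departure is the final limit: the paper bounds $\sum_i\kappa\beta^{-i}\|r_{n+k-i+1}-r_{n+k-i}\|$ by invoking the extra step-size condition (\ref{apndxb1}) ($\alpha b_{k+1}\geq b_k$) to dominate $b_{n+k-i}$ by $\alpha^{i}b_{n+k}$, and then sends $n\to\infty$ followed by $k\to\infty$; you instead use the standard fact that a summable geometric kernel convolved with a null sequence is null, which requires only $\|c_i\|\to 0$. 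Your route is marginally cleaner and, notably, makes no use of (\ref{apndxb1}) in this lemma, whereas the paper imposes that condition specifically for the DSA-BDH analysis. Both versions draw on the same external inputs (compactness of the $\mathcal{X}_i$, Assumption 2(iv), consensus for $y_k$, and the martingale bound), and there is no circularity, since the consensus argument for $y_k$ in DSA-BDH does not rely on Lemma \ref{xbound}.
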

Using Lemma \ref{xbound} let us
write (\ref{medium3}) as

\begin{equation}\label{b-k-z}
z_{k+1}  = z_{k}+b_{k}\big( (Q^{*}\varotimes I_{n}) \mathbf{P}(z_{k}+y_{k})-\mathbf{P}(z_{k}+y_{k}) + o(1) \big)
\end{equation}
Since $a_{k}=o(b_{k})$ in (\ref{slow3})-(\ref{b-k-z}), they constitute a two time scale iteration (\cite{Borkar2time} or \cite{BorkarBook}, Chapter 6). So while analyzing (\ref{medium3}), we can assume that $y_{k}$ is $\approx$ a constant, say $\mathbf{y} := [y,....,y] $.
Add $\textbf{y}$ to both sides  of (\ref{medium3}) to obtain
\begin{eqnarray}
z_{k+1} + \textbf{y} & =& z_{k} + \textbf{y} + b_{k}x_{k} \nonumber \\
\mbox{i.e.,} \ \ r_{k+1} & =& r_{k}+b_{k}x_{k} \label{r-it}
\end{eqnarray}
with $r_{k}=z_{k}+\textbf{y}$. Thus  (\ref{fast3}), (\ref{r-it}) can be written as
\begin{eqnarray}
x_{k}&=&(Q\varotimes I_{n})\{x_{k-1} + \mathbf{P}(r_{k})\}-\mathbf{P}(r_{k}), \label{fast4} \\
r_{k+1}&=&r_{k} + b_{k}x_{k}. \label{medium4}
\end{eqnarray}
This is exactly the distributed Boyle-Dykstra-Han projection
algorithm in the variable $r_{k}$ (see (\ref{exeq})-(\ref{zedeq})).

\begin{lem}\label{mainlem-soham}
If $r^{*}$ is any limit point of (\ref{medium4}) (or equivalently,
(\ref{medium3})) with $r^{*}=z^{*}(y)+\mathbf{y} $ for some $z^{*}$ and a fixed
$\mathbf{y} $, then
\[
\mathbf{P}(r_{k})\to \mathbf{P}(r^{*})=P_{\mathcal{X}}(y).
\]

\end{lem}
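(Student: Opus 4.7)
\textit{Proof proposal.} The plan is to recognize (\ref{fast4})--(\ref{medium4}) as precisely the distributed Boyle-Dykstra-Han iteration analyzed in Theorem \ref{Sohamresult}, with $r$ playing the role of $z$ and the frozen common value $y$ playing the role of the input $y_0$, and then transfer the projection property through the non-expansiveness of $\mathbf{P}$. Because the two time-scale condition $a_k = o(b_k)$ lets us treat $\mathbf{y}$ as constant over the window on which the fast dynamics is analyzed (cf.\ \cite{BorkarBook}, Chapter 6), this reduction is available.

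I would carry out the argument in three steps. First, Lemma \ref{xbound} lets me rewrite the increment as
\[
r_{k+1} - r_k = b_k\bigl[(Q^{*}\otimes I_n)\mathbf{P}(r_k) - \mathbf{P}(r_k)\bigr] + o(b_k),
\]
and the fact that $\sum_k b_k = \infty$ forces any limit point $r^{*}$ to satisfy the fixed-point identity $(Q^{*}\otimes I_n)\mathbf{P}(r^{*}) = \mathbf{P}(r^{*})$, since otherwise the driving term would be bounded away from zero in a neighborhood of $r^{*}$ and prevent visits to it from being sustained. Because $Q^{*}$ projects onto the span of $\mathbf{1}$, this identity says every component $P^i((r^{*})^i)$ equals a common point $p \in \bigcap_i \mathcal{X}_i = \mathcal{X}$. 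Second, I would invoke the Lyapunov argument behind Theorem \ref{Sohamresult} from \cite{Phade}, with the Fej\'er-type function $r \mapsto \|r - r^{*}\|^2$, to identify $p$ with the unique projection $P_{\mathcal{X}}(y)$. Third, the claimed limit $\mathbf{P}(r_k) \to \mathbf{P}(r^{*})$, equal to the stacked vector of copies of $P_{\mathcal{X}}(y)$, follows at once from the non-expansiveness of each $P^i$ together with $r_k \to r^{*}$.

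The main obstacle is step two. Theorem \ref{Sohamresult} is stated only under the initialization $r_0^i = y$ for all $i$ with $x_0 = 0$, whereas in the two time-scale reduction the fast iteration effectively starts mid-trajectory with a general $(r_{k_0}, x_{k_0})$ in which consensus need not yet hold. I would address this by observing that the Fej\'er-monotone estimates in \cite{Phade} depend only on the structural form of the recursion (\ref{fast4})--(\ref{medium4}), not on the particular initial condition, so that any limit point is trapped in the same fixed-point set identified by Theorem \ref{Sohamresult}; the residual initialization discrepancy is then absorbed by the standard two time-scale machinery, which ensures the gap between the frozen-$\mathbf{y}$ idealized dynamics and the actual $(y_k, z_k, x_k)$ trajectory vanishes asymptotically.
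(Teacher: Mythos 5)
Your reduction of (\ref{fast4})--(\ref{medium4}) to the distributed Boyle--Dykstra--Han recursion is the right starting point, and your first step correctly pins down the structure of the limit set: all components of $\mathbf{P}(r^{*})$ coincide with a single point $c\in\mathcal{X}$ (the paper gets this from Lemma 4.4 of \cite{Phade}). The genuine gap is in your second step. Appealing to ``the Lyapunov argument behind Theorem \ref{Sohamresult} with $r\mapsto\|r-r^{*}\|^{2}$'' cannot identify $c$ with $P_{\mathcal{X}}(y)$: a Fej\'er-type function centered at $r^{*}$ presupposes you already know the limit, and your fixed-point identity $(Q^{*}\otimes I_{n})\mathbf{P}(r^{*})=\mathbf{P}(r^{*})$ is satisfied by stacked vectors whose common projection is \emph{any} point of $\mathcal{X}$. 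Nothing in your argument ties $c$ to the input $y$.

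The paper supplies exactly this missing link with two ingredients you omit. First, a conservation law: from (\ref{fast3}) and $x_{0}=0$ one gets $(Q^{*}\otimes I_{n})x_{k}=0$ for all $k$, so multiplying (\ref{medium4}) by $Q^{*}\otimes I_{n}$ shows $(Q^{*}\otimes I_{n})r_{k}=(Q^{*}\otimes I_{n})r_{0}=\mathbf{y}$ (using $z_{0}=0$); hence the average of the components of any limit point equals $y$. Second, a normal-cone argument: $P^{i}(r_{i}^{*})=c$ means $r_{i}^{*}-c\in N_{\mathcal{X}_{i}}(c)\subseteq N_{\mathcal{X}}(c)$ (the inclusion because $\mathcal{X}\subset\mathcal{X}_{i}$), and since $N_{\mathcal{X}}(c)$ is a convex cone the average $y-c=\frac{1}{N}\sum_{i}(r_{i}^{*}-c)$ also lies in $N_{\mathcal{X}}(c)$, which is precisely the variational characterization of $c=P_{\mathcal{X}}(y)$. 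Your worry about initialization is actually moot --- Algorithm 3 sets $z_{0}=0$ and $x_{0}=0$, so $r_{0}=\mathbf{y}$ and the hypotheses of the conservation argument hold exactly --- but without the conservation law and the normal-cone step your proof does not close. (A smaller issue: your claim that a limit point must kill the driving term because it ``would be bounded away from zero near $r^{*}$'' is heuristic; limit sets of such iterations are invariant sets of the limiting ODE, not automatically equilibria, and one needs the Lyapunov structure of \cite{Phade} to conclude they are.)
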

\begin{proof}
Since $r_{k}=z_{k}+\mathbf{y} $ and $z_{0}=0$, we have $r_{0}=\mathbf{y} $. We first show that $r_{k}$
remains invariant under averaging.  From (\ref{fast3}),
 $$(Q^{*}\varotimes I_{n})x_{k+1} = (Q^{*}\varotimes I_{n})x_{k}=\cdots = (Q^{*}\varotimes I_{n})x_{0} = 0.$$
Multiply both sides of (\ref{medium4}) by $Q^{*}\varotimes I_{n}$
to get
 :
\[
(Q^{*}\varotimes I_{n})r_{k+1}=(Q^{*}\varotimes I_{n})r_{k}.
\]
By iterating we get $(Q^{*}\varotimes I_{n})r_{k}=(Q^{*}\varotimes I_{n})r_{0}$.
Since $z_{0}^{i}=0 \ \forall i$, we have
\[
(Q^{*}\varotimes I_{n})r_{k}=(Q^{*}\varotimes I_{n})(z_{k}+\mathbf{y} )=(Q^{*}\varotimes I_{n})(z_{0}+\mathbf{y} )=\mathbf{y}.
\]
That is, $\frac{1}{N}\sum_{i}r_{k}^{i}$ remains a constant equal to $y$ and since $r^*$ is a limit point of $\{r_n\}$, $\frac{1}{N}\sum_{i}r^*_{i} = y$.
Furthermore, we have the existence of a point $c \in \mathbb{R}^n$ such
that (see Lemma 4.4, \cite{Phade})
\[
\mathbf{P}(r^{*})=[P^{1}(r_{1}^{*})\,,\cdot\cdot\cdot,\,P^{N}(r_{N}^{*})]=[c\,,\cdot\cdot\cdot,c]
\]
\[
\frac{1}{N}\sum_{i}r_{i}^{*}=\frac{1}{N}\sum_{i}r_{k}^{i}=y,
\]
Since $P^i(r_i^*)=c \, \ \forall i$ we have $c \in \mathcal{X}=\cap_i \mathcal{X}_i$. This in turn implies for all $i$, $P_\mathcal{X}(r^*_i)=c$ because $\mathcal{X} \subset \mathcal{X}_i $.
Hence for all $i$, $r_i^*$ lie in the normal cone at the point $c$. Hence so does  $y$. This means that $c =P_\mathcal{X}(y)$ which proves the claim.

\end{proof}

For the original algorithm, this translates into:

\begin{cor} $ \| \mathbf{P}(r_k)-P_{\mathcal{X}} (y_{k}) \| \to 0 $
\end{cor}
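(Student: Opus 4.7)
The plan is to deduce this corollary from Lemma \ref{mainlem-soham} by invoking the standard two-timescale stochastic approximation principle, taking advantage of the assumption $a_k = o(b_k)$. Recall that in the frozen analysis we set $\mathbf{y} = [y,\ldots,y]$ and wrote $r_k = z_k + \mathbf{y}$; Lemma \ref{mainlem-soham} then told us that $\mathbf{P}(r_k) \to P_{\mathcal{X}}(y)$ in that frozen regime. In the actual, un-frozen algorithm, the analogue of $r_k$ is $z_k + y_k$, so the corollary is exactly the statement that unfreezing $y_k$ does not spoil this convergence.

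First I would reduce the fast dynamics to a standard form. Using Lemma \ref{xbound}, the pair (\ref{fast3})–(\ref{medium3}) collapses (up to an $o(1)$ perturbation) to the single recursion (\ref{b-k-z}) in $z_k$, driven by the current $y_k$. Because $a_k = o(b_k)$, the slow iterate $y_k$ changes by $\mathcal{O}(a_k) = o(b_k)$ per step and is therefore quasi-static on the natural time scale of (\ref{b-k-z}). One can then apply Lemma 1, Chapter 6, \cite{BorkarBook} (exactly as was done to justify Lemma \ref{projlem1}), which gives, uniformly in the bounded value of the slow variable, that $z_k$ tracks the $y_k$-parametrized attractor $z^*(y_k)$ of the frozen fast o.d.e.; equivalently, $\|r_k - (z^*(y_k) + \mathbf{y}_k)\| \to 0$ where $\mathbf{y}_k := [y_k^1,\ldots,y_k^N]$ is interpreted as currently frozen.

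Next, I would close the loop using Lemma \ref{mainlem-soham} and continuity. For each fixed $y$, Lemma \ref{mainlem-soham} identifies the frozen attractor: $\mathbf{P}(z^*(y) + \mathbf{y}) = P_{\mathcal{X}}(y)$. Because $\mathbf{P}$ and $P_{\mathcal{X}}$ are non-expansive (in particular continuous) and $\{y_k\}$ is bounded, the two-timescale tracking statement above combined with this identification yields
\[
\|\mathbf{P}(z_k + y_k) - P_{\mathcal{X}}(y_k)\| \to 0,
\]
which is exactly the claim $\|\mathbf{P}(r_k) - P_{\mathcal{X}}(y_k)\| \to 0$.

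The main obstacle is the non-standard structure of the fast block: it is not a single stochastic-approximation recursion but the coupled pair (\ref{fast3})–(\ref{medium3}), with the auxiliary variable $x_k$ playing an essential role in the Boyle–Dykstra–Han construction. This is precisely the additional difficulty flagged in the introduction, and it is handled by Lemma \ref{xbound}, which shows that $x_k$ asymptotically equals a deterministic function of the current $(z_k, y_k)$; once that is in hand, the two-timescale framework applies in standard form and the rest of the argument is a routine combination of Lemma \ref{mainlem-soham} with the uniform continuity of $\mathbf{P}$ and $P_{\mathcal{X}}$.
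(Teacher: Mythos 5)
Your proposal is correct and takes essentially the same route as the paper: the paper likewise uses Lemma \ref{xbound} to collapse the fast block to (\ref{b-k-z}), identifies the frozen attractor via Lemma \ref{mainlem-soham}, and then disposes of the corollary by citing the two-timescale tracking result (Lemma 1, Chapter 6 of \cite{BorkarBook}), exactly as you do. Your write-up merely makes explicit the continuity/identification step that the paper leaves as a one-line remark.
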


As in Lemma \ref{projlem1}, this is a direct consequence of Lemma 1, Chapter 6, \cite{BorkarBook}.
The rest of the analysis is exactly the same as for DSA-GD and is therefore omitted.

\section{Stability}

In this section we give a sufficient condition for  the proposed algorithms to satisfy Assumption 2(iv) (i.e., have a.s.\ bounded iterates). \\

\noindent \textit{A. Stability of DSA-GD :} Boundedness of $z_k$ is obvious because it is projected onto a compact set at every step. To establish stability of $y_k$ we adapt a stability test from  \cite{BorkarBook}, Chapter 3, which was originally proposed in \cite{BorkarMeyn}.\\

Let $h_{c}^{i}(y^{i}):= \frac{h^{i}(cy^{i})}{c}$. Consider the following scaling limit for each $i$, assumed to exist:
\[
h_{\infty}^{i}(y^{i}):=\lim_{c\to\infty}\frac{h^{i}(cy^{i})}{c}.
\]

Note that the $h^i_c$'s have a common Lipschitz constant and hence are equicontinuous, implying that the above convergence is uniform on compact sets.
Suppose for each $i$ the following limiting ODE has the origin as
the unique globally asymptotically stable equilibrium :

\[
\dot{y}^{i}(t) =h_{\infty}^{i}(y^{i}(t))-y^{i}(t).
\]

Then if we let $H_{c}(y) := \frac{1}{N}\sum_{i=1}^{N}h_{c}^{i}(y^{i})=\frac{1}{N}\sum_{i=1}^{N}\frac{h^{i}(cy^{i})}{c},\,c\geq1,\,y\in\mathbb{R}^{nN}\,$,
it will satisfy
\[
H_{c}(y)\to H_{\infty}(y):=\frac{1}{N}\sum_{i=1}^{N}h_{\infty}^{i}(y^{i})\:\text{as }c\to\infty
\]
uniformly on compacts. Consider the following limiting ODE in $y(t)=[y^{1}(t), ..., y^{N}(t)]$,
which will have the origin ($\mathbf{0}\in\mathbb{R}^{nN}$) as the
unique globally asymptotically stable equilibrium :
\begin{align*}
\dot{y}(t) & =h_{\infty}(y(t))-y(t)\\
\Longrightarrow (\mathbf{1}^{T}\otimes I_{n})\dot{y}(t) & =(\mathbf{1}^{T}\otimes I_{n})\{h_{\infty}(y(t))-y
(t)\}\\
\Longrightarrow \frac{1}{N}\sum_{i=1}^{N}\dot{y}^{i}(t) & =\frac{1}{N}\sum_{i=1}^{N}\{h_{\infty}^{i}(y^{i}(t))-y^{i}(t)\}.
\end{align*}

If there is consensus so that $y^{i}(t)=y^{j}(t)=\langle y(t) \rangle $, we can write
the above as
\begin{equation} \label{stabeq}
\langle \dot{y}(t) \rangle=H_{\infty}(\langle y(t) \rangle)-\langle y(t) \rangle.
\end{equation}

Again, (\ref{stabeq}) has the origin ($\mathbf{0}\in\mathbb{R}^{n}$) as the
unique globally asymptotically stable equilibrium.\\
Let $T_{0}=0$ and for $n \geq 0$, $T_{n+1}=\min\{t_{m}:t_{m}>T_{n}+T\}$ where $T>0$
and $t_m = \sum_{k=0}^ma_k $. Without loss of generality, let $\sup_{k}a_{k}\leq1$. Then we have $T_{n+1}\in[T_{n}+T,T_{n}+T+1]\, \ \forall n$. Also,
we write $T_{n}=t_{m(n)}$ for a suitable $m(n)$. Let $y(t)=[y^{1}(t),...,y^{N}(t)]$
define a continuous, piecewise linear trajectory linearly interpolated between $y(t_k) := y_k$ and $y(t_{k+1}) := y_{k+1}$ on $[t_{k},t_{k+1}]$.\\

We construct another piecewise linear  trajectory $\hat{y}(t)$
derived from $y(t)$ by setting $\hat{y}(t)=\frac{y(t)}{r_{n}}$
for $t\in[T_{n},T_{n+1})$, where $r_{n}=\max_{i=1,...,N}\{\|y^{i}(T_{n})\|,1\}$.
This implies in particular that
\begin{equation} \label{ybound}
\hat{y}^{i}(T_{n})\leq1\: \ \forall\,n,i.
\end{equation}
Let $y^{\infty}(t)$ denote a generic solution to the equation (\ref{stabeq}) and
$y_{n}^{\infty}(t)$ denote its solution which starts at the point
$\langle\hat{y}(T_{n})\rangle$. For later use, we let $\hat{y}(T_{n+1}^{-})=y(T_{n+1})/r_{n}$. The following lemma is from \cite{BorkarBook} :

\begin{lem}\label{lemtemp}
{(i) $\exists\,T>0$ such that
for all initial conditions $y$ on the unit sphere, $\|y^{\infty}(t)\|<\frac{1}{8}$
for all $t>T$.}

\textit{(ii) The sequence $\xi_{k}\doteq\sum_{p=0}^{k-1}a_{p}\hat{M}_{p+1},\,k\geq1$
with $\hat{M}_{\ell}=\frac{M_{\ell}}{r_{n}}$ for $m(n) \leq \ell < m(n+1)$ is square integrable and a.s.\ convergent }

\textit{(iii) The sequence $\hat{\xi}_{k}\doteq\sum_{p=0}^{k-1}a_{p} A_p \hat{M}_{p+1}, k\geq1$
with $\{A_p\}_{p\geq 0}$ being a sequence of doubly stochastic matrices is a.s.\ convergent }

\end{lem}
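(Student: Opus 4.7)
The three parts split naturally into a deterministic ODE argument (part (i)) and two martingale convergence claims (parts (ii) and (iii)). For (i), the plan is to combine global asymptotic stability of the origin for (\ref{stabeq}) with continuous dependence on initial data and compactness. For every $y$ on the unit sphere, global attraction supplies a time $T(y)$ with $\|y^\infty(t;y)\| < 1/16$ for $t > T(y)$. Since $H_\infty$ inherits the common Lipschitz constant of the $h^i_c$'s (the convergence being uniform on compacts), solutions of (\ref{stabeq}) depend continuously on the initial condition on the compact interval $[0, T(y)+1]$, so there is an open neighborhood $U_y$ of $y$ on which every trajectory stays inside the $1/8$-ball for all $t \geq T(y)+1$. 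A finite subcover of the compact unit sphere by such neighborhoods, and the maximum of the associated times, yields the desired uniform $T$.

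For (ii) and (iii), I would first check that $r_n$ is $\mathcal{F}_{m(n)}$-measurable and hence $\mathcal{F}_p$-measurable for $p \geq m(n)$, so that $\hat{M}_{p+1} = M_{p+1}/r_n$ is still a martingale increment with respect to $\{\mathcal{F}_p\}$; the same holds for $A_p \hat{M}_{p+1}$ when $A_p$ is $\mathcal{F}_p$-measurable. The crux of the argument is to produce an a.s.\ finite constant $C$ with $\|\hat{y}_p\| \leq C$ for every $p$. Given (\ref{ybound}) and the fact that the total $a$-length of each window $[T_n, T_{n+1})$ is at most $T+1$, I would apply a discrete Gronwall bound to the rescaled recursion, using Lipschitzness of $h$, boundedness of $\{z_k\}$ (projected onto the compact set), and Doob's $L^2$-maximal inequality restricted to the window to dominate the noise. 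Once this bound is available, (\ref{mgbd}) gives
\[
\mathbb{E}\bigl[\|\hat{M}_{p+1}\|^2 \mid \mathcal{F}_p\bigr] \leq K\bigl(r_n^{-2} + \|\hat{y}_p\|^2\bigr) \leq K(1 + C^2),
\]
and square-summability of $\{a_p\}$ yields $\sum_p a_p^2 \mathbb{E}[\|\hat{M}_{p+1}\|^2 \mid \mathcal{F}_p] < \infty$ a.s. The $L^2$-martingale convergence theorem (Appendix C of \cite{BorkarBook}) then delivers (ii). For (iii), a doubly stochastic matrix has spectral norm at most $1$, so $\mathbb{E}[\|A_p \hat{M}_{p+1}\|^2 \mid \mathcal{F}_p] \leq \mathbb{E}[\|\hat{M}_{p+1}\|^2 \mid \mathcal{F}_p]$ and the same quadratic-variation bound carries through.

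The main obstacle is the apparent circularity in establishing the uniform bound $\|\hat{y}_p\| \leq C$: the deterministic part of the rescaled recursion grows by at most a factor $e^{L(T+1)}$ over one window, but the martingale contribution on the window must be controlled in order to close the induction on $n$. The standard remedy is to invoke Doob's inequality on the martingale restricted to a single window, whose conditional quadratic variation is dominated by a constant multiple of $\sum_{p \in [m(n), m(n+1))} a_p^2$, a quantity that vanishes as $n \to \infty$ by square-summability. This lets one absorb the noise term into the Gronwall inequality inductively on $n$ without appealing to the a.s.\ convergence of the full series $\xi_k$, which is precisely the statement (ii) one is trying to establish.
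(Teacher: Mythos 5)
Your treatment of (i) and of the mechanics of (ii)--(iii) essentially matches the paper's, which simply defers to Chapter 3 of \cite{BorkarBook}: part (i) is their Lemma 1, and your compactness/continuous-dependence argument is the standard proof of it (though note that to keep $y^{\infty}(t)$ inside the $1/8$-ball for \emph{all} $t>T$, and not merely at the time it first enters, you must also invoke Lyapunov stability of the origin, not just attractivity); parts (ii) and (iii) are, as you say, the conditional-variance bound plus the martingale convergence theorem, with $\|A_p\|\leq 1$ for doubly stochastic $A_p$ disposing of (iii) exactly as the paper does.

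The genuine problem is your resolution of the circularity you correctly identified. You propose to obtain the pathwise bound $\sup_p\|\hat{y}_p\|\leq C$ first, controlling the noise over each window by Doob's $L^2$-maximal inequality, ``whose conditional quadratic variation is dominated by a constant multiple of $\sum_p a_p^2$.'' But that domination is precisely what is in question: by (\ref{mgbd}), $\mathbb{E}\left[\|\hat{M}_{p+1}\|^2 \mid \mathcal{F}_p\right]\leq K(1+\|\hat{y}_p\|^2)$, so the quadratic variation over a window is bounded by a constant only once $\sup_p\|\hat{y}_p\|$ is --- the circle is not broken. The way the cited Lemma 5 of Chapter 3, \cite{BorkarBook} (and \cite{BorkarMeyn}) breaks it is to run the Gronwall argument on \emph{conditional second moments} first: iterating the bound $\mathbb{E}\left[\|\hat{y}_{m(n)+k+1}\|^2 \mid \mathcal{F}_{m(n)}\right]\leq (1+Ca_{m(n)+k}+Ca_{m(n)+k}^2)\,\mathbb{E}\left[\|\hat{y}_{m(n)+k}\|^2 \mid \mathcal{F}_{m(n)}\right]+Ca_{m(n)+k}$ (the martingale cross term vanishes under the conditional expectation) over a window of $a$-length at most $T+1$ gives a bound uniform in $n$ and $k$ that requires no pathwise control of the noise whatsoever. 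This yields $\sum_p a_p^2\,\mathbb{E}\left[\|\hat{M}_{p+1}\|^2\right]<\infty$, hence square-integrability and then a.s.\ convergence of $\xi_k$ and $\hat{\xi}_k$. The pathwise bound $\sup_p\|\hat{y}_p\|\leq K^*$ is then derived \emph{afterwards}, in the proof of Theorem \ref{ystab}, using the already-convergent martingale to control the noise term $B_n$. You should reverse the order of these two steps accordingly.
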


\begin{proof}  The proof of (i) is the same as in Chapter 3, Lemma 1, \cite{BorkarBook}, pp.\ 22-23, whereas (ii) is proved in Chapter 3, Lemma 5, \cite{BorkarBook}, p.\ 25.  The latter lemma establishes and uses the fact
$$\sum_{p=0}^{\infty}a_p^2\mathbb{E}\left[\|\hat{M}_{p+1}\|^2 | \mathcal{F}_p\right] < \infty.$$
To prove (iii), note that multiplication by a linear operator doesn't affect the martingale property. Also,
  $$ \sum_{p=0}^{\infty}a_p^2\mathbb{E}\left[\| A_p \hat{M}_{p+1}\|^2 | \mathcal{F}_p\right] \leq \sum_{p=0}^{\infty}a_p^2\mathbb{E}\left[\|\hat{M}_{p+1}\|^2 | \mathcal{F}_p\right] < \infty\,\,\text{a.s.}$$
Convergence follows from  the martingale convergence theorem (Appendix C, \cite{BorkarBook}).

\end{proof}

The following theorem proves that the iterates $y_k$ remain bounded. Since the proof is an adaptation of the arguments of \cite{BorkarMeyn} or \cite{BorkarBook}, Chapter 2, we give only a sketch that highlights the significant points of departure.

\begin{thm} \label{ystab}
$\sup_k\|y_k\|<\infty$. \end{thm}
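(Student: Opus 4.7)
The plan is to adapt the Borkar--Meyn stability test of Chapter 3 of \cite{BorkarBook} to our distributed setting. Three features are new: (a) the averaging term $(Q \otimes I_n)y_k$ is not pre-multiplied by $a_k$, (b) the slow iteration contains the fast-time-scale correction $a_k(z_k - y_k)$, and (c) the scaling limit is most naturally studied through the consensus value $\langle y_k\rangle$, with a separate argument needed to control the consensus error under rescaling. I would follow a standard contradiction scheme: assume along a subsequence $r_n \uparrow \infty$ and derive $r_{n+1} \le r_n/4 + 1$ eventually, which is impossible.

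First I would observe that $z_k^i \in \mathcal{X}_i$ is a.s.\ bounded by Assumption 1(i), so $z_k/r_n \to 0$ whenever $r_n \to \infty$. Defining the rescaled iterates $\hat{y}_k := y_k/r_n$ on $[T_n,T_{n+1})$, dividing (\ref{slow}) by $r_n$ yields
\[
\hat{y}_{k+1} = (Q\otimes I_n)\hat{y}_k + a_k\bigl(z_k/r_n - \hat{y}_k + h_{r_n}(\hat{y}_k) + \hat{M}_{k+1}\bigr),
\]
where $h_{r_n}(\hat{y}) \to h_\infty(\hat{y})$ uniformly on compacts because the $h^i$'s share a common Lipschitz constant. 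By (\ref{ybound}) the initial condition on each interval satisfies $\|\hat{y}^i(T_n)\|\le 1$.

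Next I would reproduce the proof of Lemma \ref{consensus} in rescaled form on each interval $[T_n,T_{n+1}]$: iterating the recursion as in (\ref{Gamma})--(\ref{difference}) and applying (\ref{eq:-3}) shows that the disagreement $\hat{y}_k - (Q^*\otimes I_n)\hat{y}_{m(n)}$ decays geometrically in the iteration index, while the cumulative forcing contributes only $\mathcal{O}(\sum a_j)=\mathcal{O}(T)$ (times bounded quantities) plus a martingale piece controlled via Lemma \ref{lemtemp}(ii)--(iii). Because $a_k\to 0$ the number of iterations $m(n+1)-m(n)$ grows to infinity while the continuous-time length stays near $T$, so the rescaled trajectories become asymptotically consensual on each interval. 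Then left-multiplying the rescaled update by $\frac{1}{N}(\mathbf{1}^T\otimes I_n)$ and exploiting double stochasticity of $Q$ gives an iteration for $\langle\hat{y}_k\rangle$ that is, modulo vanishing terms, the Euler discretisation of (\ref{stabeq}). Standard ODE-approximation arguments (Chapter 2, \cite{BorkarBook}) then give $\|\langle\hat{y}(T_{n+1}^-)\rangle - y_n^\infty(T_{n+1}-T_n)\|\to 0$.

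Finally I would invoke Lemma \ref{lemtemp}(i) to choose $T$ large enough that $\|y_n^\infty(T_{n+1}-T_n)\|<\tfrac18$, combine with the consensus estimate to get $\|\hat{y}(T_{n+1}^-)\|<\tfrac14$ for all sufficiently large $n$, and conclude $r_{n+1}\le r_n/4 + 1$, which is incompatible with $r_n\uparrow\infty$. Hence $\sup_n r_n<\infty$, and bounding the increments of $y_k$ between successive $T_n$'s by the linear growth of $h$ and the square-summability of $\{a_k\}$ yields $\sup_k\|y_k\|<\infty$. The main obstacle I expect is step three: because the consensus-inducing operator $(Q\otimes I_n)$ acts iteration-by-iteration while the SA increment and the ODE clock scale with $a_k$, one must carefully separate the "hidden fast" mixing from the slow rescaled ODE approximation on every interval $[T_n,T_{n+1}]$, and show that the errors introduced by restarting the scaling at each $T_n$ remain negligible uniformly in $n$.
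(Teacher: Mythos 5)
Your proposal follows essentially the same route as the paper: rescale by $r_n$ on each interval $[T_n,T_{n+1})$, note that boundedness of $z_k$ makes $z_k/r_n$ vanish, rerun the consensus argument of Lemma \ref{consensus} for the rescaled iterates, show the averaged rescaled iteration tracks the scaled ODE (\ref{stabeq}), and use Lemma \ref{lemtemp}(i) to force the contraction that contradicts $r_n\uparrow\infty$. The only cosmetic difference is that the paper phrases the final contradiction via the impossibility of arbitrarily large jumps out of the unit ball over intervals of length $T+1$ (controlled by a discrete Gronwall bound), rather than via the recursion $r_{n+1}\le r_n/4+1$, but these are equivalent formulations of the same Borkar--Meyn argument.
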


\begin{proof}
(Sketch) Suppose that $\| y^{i}(t)\|\to\infty$ for some $i$ along a subsequence.  We obtain a contradiction using the following argument :\\

\noindent \textit{Claim :} $\lim_{n\to\infty}\sup_{t\in[T_{n},T_{n+1}]}\|\langle\hat{y}(t)\rangle-y_{n}^{\infty}(t)\|=0$
a.s.\\

\noindent \textit{Proof.} For $m(n)<k<m(n+1)$, we have, on dividing both sides of (\ref{slow}) by $r_n$,
\begin{equation}\label{alt}
\hat{y}(t_{k+1})=(Q\varotimes I_{n})\hat{y}(t_{k})+a_{k}(h_{r_{n}}(\hat{y}(t_{k}))-\hat{y}_{k}(t)+\hat{M}_{k+1}+\epsilon_{k}),
\end{equation}
where $\epsilon_{k}^{i}=\frac{z_k^i}{r_{n}}$.
Since $z_k^i$ is bounded, $\epsilon_{k}^{i}\to0\, \ \forall i$ as $n\to\infty$ (since $r_{n}\to\infty$
by assumption). Iterating the above equation we get,

\begin{multline*}
\hat{y}(t_{m(n)+k})=(Q^{k}\varotimes I_{n})\hat{y}(t_{m(n)})+ \sum_{i=0}^{k-1} a_{m(n)+i} (Q^{k-i-1}\varotimes I_{n}) \big\{ h_{r_{n}}(\hat{y}(t_{m(n)+i}))-\hat{y}(t_{m(n)+i})\\ +\hat{M}_{m(n)+i+1}+\epsilon_{m(n)+i}\big\}.
\end{multline*}

Taking norms in the above we have :
\begin{multline}\label{newno}
\| \hat{y}(t_{m(n)+k})\| \leq \| \hat{y}(t_{m(n)})\| + \sum_{i=0}^{k-1} a_{m(n)+i} \big\{ \| h_{r_{n}}(\hat{y}(t_{m(n)+i}))-\hat{y}(t_{m(n)+i}) +\epsilon_{m(n)+i}\|\big\} \\ + \| \sum_{i=0}^{k-1} a_{m(n)+i} (Q^{k-i-1}\varotimes I_{n})\hat{M}_{m(n)+i+1} \|.
\end{multline}
Let $L$ be a common Lipschitz
constant for the functions $h_{r_{n}}(\cdot)$ (which in fact is the same
as that for $h$). Then we have the following bound on $h_{r_n}(\cdot)$ :
\[
\|h_{r_{n}}(\hat{y}(t_{k}))\|\leq\|h_{r_n}(0)\|+L\|\hat{y}(t_{k})\|
\]
Using the above in (\ref{newno}), we have :
\begin{multline*}
\| \hat{y}(t_{m(n)+k})\| \leq \| \hat{y}(t_{m(n)})\| + \sum_{i=0}^{k-1} a_{m(n)+i} \big\{ (L+1) \| \hat{y}(t_{m(n)+i})\| +\| \epsilon_{m(n)+i}\| + \| h_{r_n}(0) \| \big\}\\ +  \| \sum_{i=0}^{k-1} a_{m(n)+i} (Q^{k-i-1}\varotimes I_{n})\hat{M}_{m(n)+i+1} \|.
\end{multline*}

We first prove that  $\sup_k \| \sum_{i=0}^{k-1} a_{m(n)+i} (Q^{k-i-1}\varotimes I_{n})\hat{M}_{m(n)+i+1} \| < \infty$. Define the sequence  $\{ \hat{\xi}_p \}$ as :
\[
\hat{\xi}_{p} = \hat{\xi}_{m(n)} + \sum_{i=0}^{p-m(n)-1} a_{m(n)+i} (Q^{p-m(n)-i-1}\varotimes I_{n})\hat{M}_{m(n)+i+1} \,\,\,\,\, \text{ if } m(n) < p \leq m(n+1)
\]

with $\hat{\xi}_{0}=0 $ and

\[
\hat{\xi}_{m(n)} = \sum_{j=0}^{n-1} \Big( \sum_{i=0}^{m(j+1) - m(j)-1} a_{m(j)+i} (Q^{m(j+1)-m(j)-i-1}\varotimes I_{n})\hat{M}_{m(j)+i+1} \Big)
\]

The sequence $\{ \hat{\xi}_p \}$ is convergent by Lemma \ref{lemtemp}(iii). Then  $$ \| \sum_{i=0}^{k-1} a_{m(n)+i} (Q^{k-i-1}\varotimes I_{n})\hat{M}_{m(n)+i+1}\| = \| \hat{\xi}_{m(n)+k} -\hat{\xi}_{m(n)}\| := B_n, $$ where $B_n\to 0$ a.s.  Also $\sum_{0\leq p\leq m(n+1)-m(n)}a_{m(n)+p}\leq T+1$, so
that
\[
\|\hat{y}(t_{m(n)+k})\| \leq \|\hat{y}(t_{m(n)})\| +C+(L+1)\sum_{i=0}^{k-1}a_{m(n)+i}\|\hat{y}(t_{m(n)+p})\|,
\]
where $C\geq(T+1) \|h(0)\| + B_n +\sum_{i=0}^{m(n+1)-m(n)}a_{n+i} \|\epsilon_{k+i}\|$ is a random constant. The last term is finite a.s.\  because the $\epsilon_{k}$ term goes to zero as stated earlier.
Since $\|\hat{y}(t_{m(n)})\|\leq1$ by (\ref{ybound}), we have
\[
\|\hat{y}(t_{m(n)+k})\| \leq (C+1)+(L+1)\sum_{i=0}^{k-1}a_{m(n)+i}\|\hat{y}(t_{m(n)+p})\|,
\]
By discrete Gronwall inequality, we have
\begin{equation}\label{yhatbound}
\sup_{0 \leq k \leq m(n+1) - m(n)}\|\hat{y}(t_{m(n)+k})\| \leq [C+1]\exp\{(L+1)(T+1)\}\equiv K^*.
\end{equation}
Since the bound is independent of $n$, we have that
\begin{equation} \label{boundedy}
\|\hat{y}(t_{m(n)+k})\|<\infty.
\end{equation}
Now consider (\ref{alt}) again :
\begin{equation*}
\hat{y}(t_{k+1})=(Q\varotimes I_{n})\hat{y}(t_{k})+a_{k}(h_{r_{n}}(\hat{y}(t_{k}))-\hat{y}_{k}(t)+\hat{M}_{k+1}+\epsilon_{k}).
\end{equation*}

Multiply it on both sides by $\frac{1}{N}(\mathbf{1}^{T}\otimes I_{n})$ to
get
\begin{eqnarray}
\langle\hat{y}(t_{k+1})\rangle&=&\langle\hat{y}(t_{k})\rangle+a_{k}\{\frac{1}{N}\sum_{i=1}^{N}h_{r_{n}}^{i}(\hat{y}^i(t_{k}))-\langle\hat{y}(t_{k})\rangle+\langle\hat{M}_{k+1}\rangle+\langle \epsilon_{k} \rangle \} \nonumber \\
&=& \langle\hat{y}(t_{k})\rangle+a_{k}\{H_{\infty}(\langle\hat{y}(t_{k})\rangle)-\langle\hat{y}(t_{k})\rangle+\delta_{k}^{1}+\delta_{k}^{2}+\langle\hat{M}_{k+1}\rangle+\langle \epsilon_{k} \rangle \}. \label{finaleq}
\end{eqnarray}
where
\begin{enumerate}
 \item $\delta_{k}^{1}=\frac{1}{N}\sum_{i=1}^{N}h_{r_{n}}^{i}(\hat{y}^i(t_{k}))-\frac{1}{N}\sum_{i=1}^{N}h_{r_{n}}^{i}(\langle\hat{y}^i(t_{k})\rangle)$. Since $\{\langle\hat{y}_k\rangle\}$ is bounded by (\ref{boundedy}) we can adapt the arguments of  Lemma \ref{consensus} of Section 3 to show that we achieve consensus and hence $\|\delta_{k}^{1}\|\to0$ : For any $m(n)<k<m(n+1)$ we have from (\ref{alt}) :

 $$\hat{y}(t_{m(n)+k})=(Q\varotimes I_{n})\hat{y}(t_{m(n)+k-1})+a_{m(n)+k-1}\hat{Y}_{m(n)+k-1},$$

 where $$\hat{Y}_{m(n)+k-1} = h_{r_{n}}(\hat{y}(t_{m(n)+k-1}))-\hat{y}(t_{m(n)+k-1})+\hat{M}_{m(n)+k}+\epsilon_{m(n)+k-1}.$$

 Iterating this equation we get,
$$\hat{y}(t_{m(n)+k})=(Q^{k}\varotimes I_{n})\hat{y}(t_{m(n)})+\{\Gamma(\hat{Y}_{m(n)+k-1},...,\hat{Y}_{m(n)})\}. $$

 where $\Gamma(\cdot)$ is defined as in Lemma \ref{consensus} and is of order $\mathcal{O}(\sum_{i=m(n)}^{m(n)+k}a_i)$ because $\hat{y}(\cdot)$ (and hence $\hat{Y}$) is bounded. Then as in Lemma \ref{consensus},  we have,

 \[
\|\hat{y}(t_{m(n)+k})-(Q^{*}\varotimes I_{n})\hat{y}(t_{m(n)})\| =\mathcal{O}(\beta^{-k})+\mathcal{O}(\sum_{i=m(n)}^{m(n)+k-1}a_i),
\]
where the $\mathcal{O}(\beta^{-k})$ term is uniform w.r.t.\ $n$.
Taking the limit $n \to \infty$ (which means $m(n) \to \infty$) followed by $k\to \infty$, we get that $\|\hat{y}^i(\cdot)-\langle\hat{y}^i(\cdot)\rangle \| \to 0$ and hence $\| \delta_k^1 \| \to 0$.

\item $\delta_{k}^{2}=\frac{1}{N}\sum_{i=1}^{N}h_{r_{n}}^{i}(\langle\hat{y}^i(t_{k})\rangle)-\frac{1}{N}\sum_{i=1}^{N}h_{\infty}^{i}(\langle\hat{y}^i(t_{k})\rangle)$.
Since $r_{n}\to\infty$ , we have by assumption $H_{r_{n}}\to H_{\infty}$ uniformly on compact sets, so
 that by (\ref{boundedy}), $\|\delta_{k}^{2}\|\to0$.
\end{enumerate}
Now we can use standard arguments, e.g., of  Lemma 1 and Theorem 2, Chapter 2, of \cite{BorkarBook}, pp.\ 12-15,  which
show that (\ref{finaleq}) has the same asymptotic behavior as  the ODE (\ref{stabeq}) a.s., which proves the claim.\\

The above claim  gives the contradiction we require. Suppose without loss of generality that $\|y(T_n)\|>1$ along the above subsequence, which we denote by $\{n\}$ again by abuse of notation. Then $r_n\to\infty$ and we have a sequence $T_{n_1},T_{n_2}...$ such that $\|y(T_{n_k})\|\uparrow \infty$, i.e., $r_{n_k} \uparrow \infty$.  We have $\|\langle\hat{y}(T_{n})\rangle\|=\|y_{n}^{\infty}(T_{n})\|\leq1$ and by Lemma \ref{lemtemp}(i) we may take $\|y_{n}^{\infty}(T_{n+1})\|<\frac{1}{8}$ since
$T_{n+1}>T+T_{n}$. So  by the above claim there exists an $N'$ such that
for all $n>N'$, we get $\|\langle\hat{y}(T_{n+1}^{-})\rangle\|<\frac{1}{4}$.
Then for all sufficiently large $n$,
\[
\frac{\|\langle y(T_{n+1})\rangle\|}{\|\langle y(T_{n})\rangle\|}=\frac{\|\langle \hat{y}(T_{n+1}^{-})\rangle\|}{\|\langle\hat{y}(T_{n})\rangle\|}<\frac{1}{4}.
\]
We conclude that if $\|y(T_n)\|>1$, $\|y(T_k)\|$ for $k \geq n$ falls back to the unit ball at an exponential rate. Thus if $\|y(T_n)\|>1$, $\|y(T_{n-1})\|$ is either even greater than  $\|y(T_{n})\|$ or inside the unit ball. Thus there must an instance prior to $n$ when $y(\cdot)$ jumps from inside the unit ball to a radius of $0.9\,r_n$. Then we have a sequence of jumps of $y(T_n)$, corresponding to the sequence $r_{n_k}\to\infty$, from inside the unit ball to points increasingly far away from the origin. But, by a discrete Gronwall argument analogous to the one used in the above claim, it follows  that there is a bound on the amount by which $y(\cdot)$ can increase over an interval of length $T+1$ when it is inside the unit ball at the beginning of the interval. This leads to a contradiction, implying $\tilde{C}=\sup_n\|y(T_n)\| <\infty $. This implies by  (\ref{yhatbound}) that
$\sup_n\|y_n\|\leq \tilde{C}K^* <\infty $.
\end{proof}

\noindent \textit{B. Stability of DSA-BDH } :  The proof that $y_k$ remains stable for DSA-BDH is identical to Theorem \ref{ystab}. The stability of $z_k$ and $x_k$ can be handled as in \cite{Phade} by minor modifications of the arguments therein. The proof uses a routine ODE approximation technique and we skip it here as it is quite lengthy.

\section{Numerical Experiment}

In this section, we present some numerical results to validate the proposed algorithms. We demonstrate the results on a stochastic optimization problem known as the stochastic utility problem.  We consider this problem for a max function with linear arguments as the objective and the unit simplex as the constraint set (Section 4.2, \cite{Nemirovski})  :

$$ \min_{y\in\mathcal{X}} \Big\{ \mathbb{E}[F(y,\xi)] = \mathbb{E}
\Big[\phi \big(\sum_{i=1}^n ( \frac{i}{n}+\xi(i) )y(i)\big)\Big]\Big\}$$
$$\mathcal{X} =  \{y\in \mathbb{R}^n\,:\,y(i)\geq 0 \,\forall i\,,\sum_{i=1}^ny(i)=1 \},\,\,\xi(i)\in \mathcal{N}(0,1) $$
where $y=[y(1),....,y(n)]$ and $\phi(t)=\max\{v_1+s_1t,.....,v_m+s_mt\}$ with $v_k$ and $s_k$ being constants. Also, $\xi(i)$ are independent zero mean Gaussian random variables with unit standard deviation. The constants $v_k$ and $s_k$ are generated from a uniform distribution with $m=10$.
         We test the DSA-GD algorithm with this setup for $N=10,20$ and $30$, where $N$ denotes the number of constraints (equal to $n+1$) and hence the number of nodes we require to do the projection. If we put the constraints in the intersection form $\bigcap_i \mathcal{X}_i $, then $\mathcal{X}_i $ will be :

        \begin{equation*}
  \mathcal{X}_i=\begin{cases}
    \{ y \in  \mathbb{R}^n : y(i) \geq 0 \}, & \text{for } 1\leq i \leq N-1.\\
   \{ y \in \mathbb{R}^n  : \sum_{i=1}^N y(i)  \geq 1\}, &  \text{for }  i=N.
  \end{cases}
\end{equation*}
Node $i$ is assigned the constraint set $\mathcal{X}_i$ in order to do a distributed projection. The matrix $Q$ is generated using Metropolis weights \footnote{The code to generate it is borrowed from \cite{Sim} } and for $N=10$ is explicitly given as  :
$$Q= \left[\begin{array}{cccccccccc}
0.25 & 0.25 & 0& 0 & 0 & 0 & 0 & 0.25 & 0 &  0.25\\
0.25 & 0.4167 & 0.333 & 0 & 0 & 0 & 0 & 0 & 0 & 0 \\
0 & 0.333 & 0.333 & 0.333 & 0 & 0 & 0 & 0 & 0 & 0\\
 0 & 0 & 0.333 & 0.333 & 0.333 & 0 & 0 & 0 & 0 & 0\\
0 & 0 & 0 & 0.333 & 0.333 & 0.333 & 0 & 0 & 0 & 0\\
0 & 0 & 0 & 0 & 0.333 & 0.333 & 0.333 & 0 & 0 & 0\\
0 & 0 & 0 & 0 & 0 & 0.333 & 0.4167 & 0.25 & 0 & 0\\
0.25 & 0 & 0 & 0 & 0 & 0 & 0.25 & 0.25 & 0.25 & 0\\
0  & 0 & 0 & 0 & 0 & 0 & 0 & 0.25 & 0.4167 & 0.333\\
0.25 & 0 & 0 & 0 & 0 & 0 & 0 & 0  & 0.333 & 0.4167
\end{array}\right]$$
The above matrix is consistent with Assumption 1. The time steps employed are $b_k=\frac{1}{k^{0.7}}$ and $a_k = \frac{1}{k^{0.95}}$. Also, $h^i(y^i,\xi) = \partial F (y^i,\xi)$, where $\partial F $ denotes the sub-gradient. Being a sub-gradient descent, the problem does not satisfy the regularity hypotheses imposed in our analysis above, nevertheless the proposed schemes work well as we show  below.\\

We plot our results in Figures 1,2 and 3 :
\begin{itemize}

\item Figure 1 shows the plot of the optimality error vs. the number of iterations. The optimality error is the difference $\|y^1_k -y^{1,*}\|$, where $y^{1,*}$ is the output after running the algorithm long enough ($ k > 10^4$) and the error tolerance $\frac{\|y^1_{k+1} -y^1_k \|}{\|y_k\|}$ is sufficiently small. As expected, the number of iterations required increases with the dimension of the problem.

\item Figure 2 shows the feasibility error, $\|y^1_k - P_\mathcal{X}(y^1_k)\|$, against the iteration count.

\item Figure 3 shows the disagreement estimate, $\|y^i_k - y^j_j\|$, between the various agents for $i,j =1,2,3,4$.

\end{itemize}

\begin{figure}[H]
\begin{center}
\includegraphics[width=11cm,height=8cm]{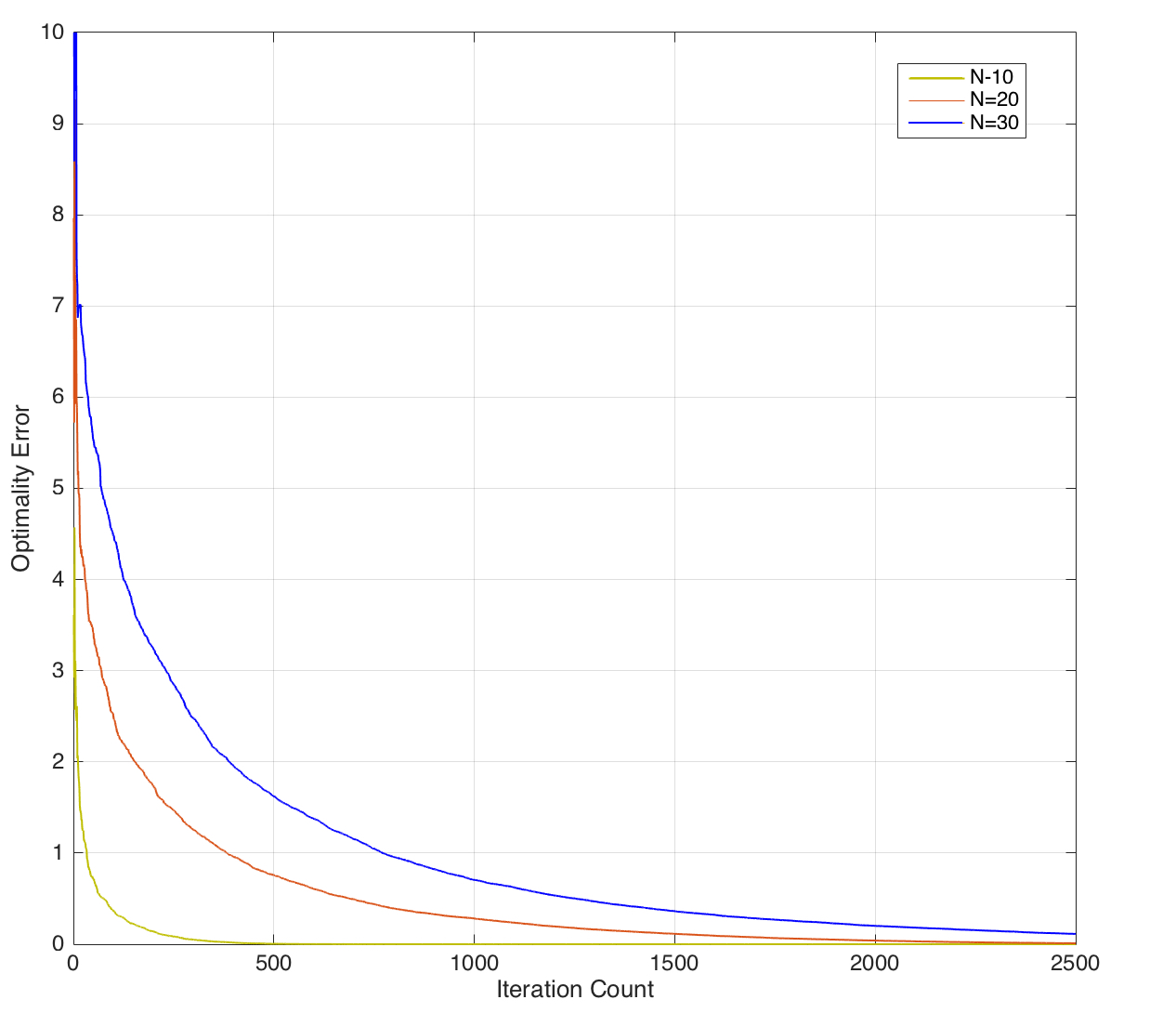}
\end{center}
\caption{Optimality Error vs. Iteration Count}

\end{figure}

\begin{figure}[H]
\begin{center}
\includegraphics[width=11cm,height=8cm]{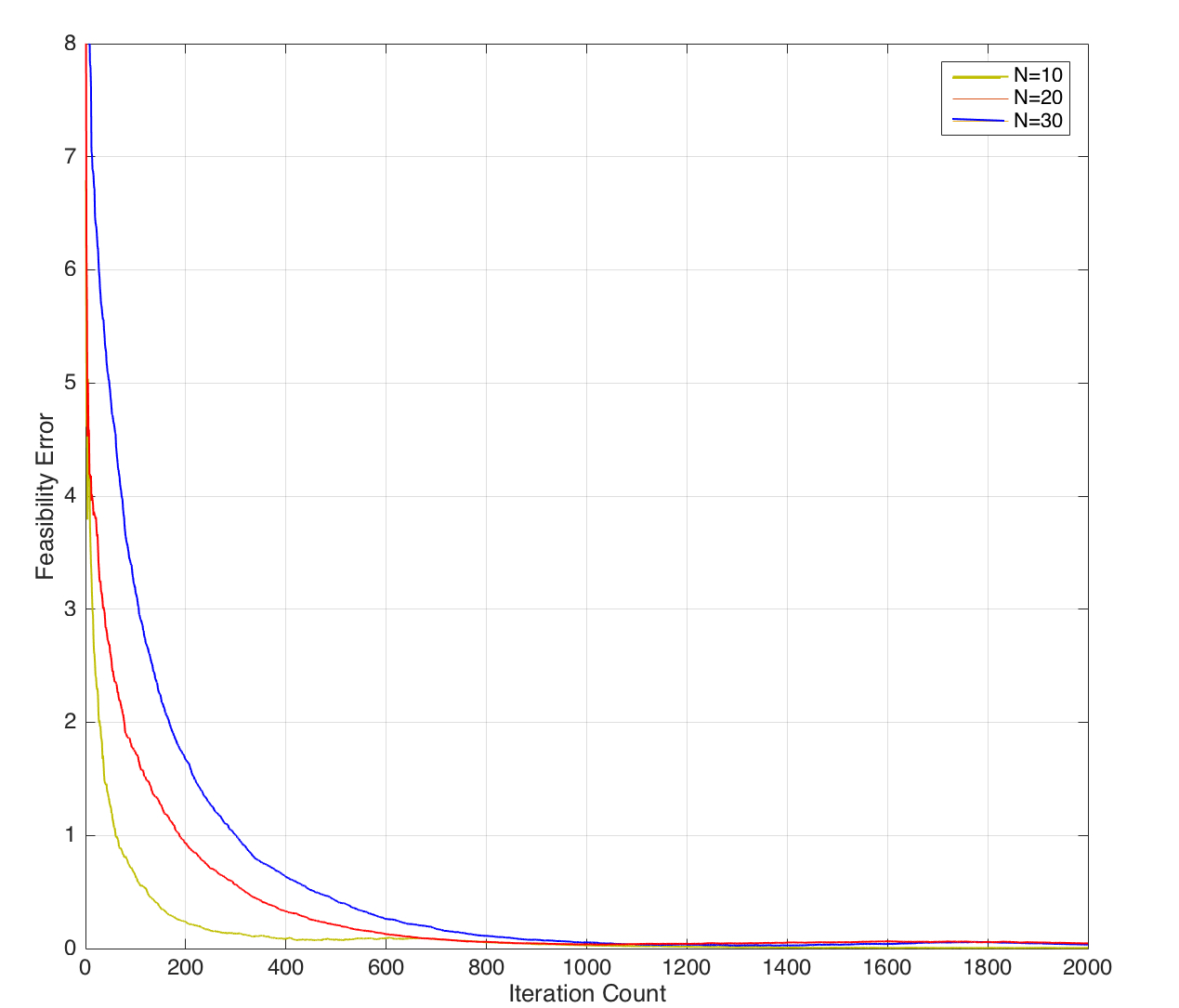}
\end{center}
\caption{Feasiblity Error vs. Iteration Count}
\end{figure}

\begin{figure}[H]
\begin{center}
\includegraphics[width=11cm,height=8cm]{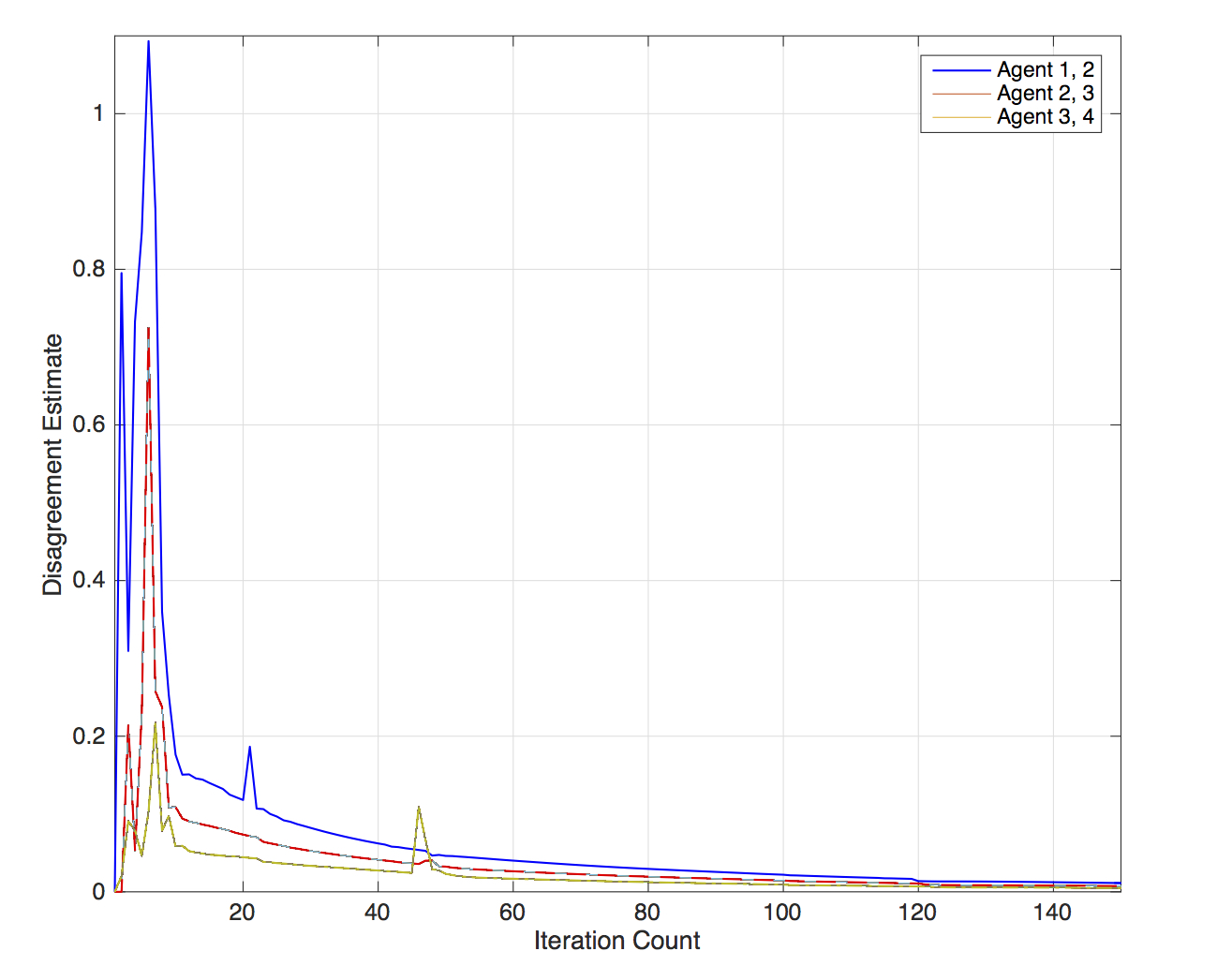}
\end{center}
\caption{Disagreement Estimate vs. Iteration Count}
\end{figure}

\section*{Appendix}
The proof of Lemma \ref{xbound} is along the same lines as in \cite{Phade} and we provide it here for sake of completeness. Recall the condition  (see eq. (\ref{apndxb1}))  on $b_k$ : For any $\epsilon>0$, there exists an $\alpha \in (1,1+\epsilon)$ and some $k_0$ such that
\begin{equation}\label{apndxb}
\alpha b_{k+1}\geq b_k,\,\,\,\forall k>k_0.
\end{equation}

\begin{proof}
For any $n$ and $k$ in (\ref{fast3}), we have
\[
x_{n+k}=(Q\varotimes I_{n})\{x_{n+k-1}+\mathbf{P}(z_{n+k}+y_{n+k})\}-\mathbf{P}(z_{n+k}+y_{n+k}).
\]

Setting $r_{k}=z_{k}+y_{k}$ and iterating the above equation we get,
\[
x_{n+k}=(Q^{k}\varotimes I_{n})x_{n}-\mathbf{P}(r_{n+k})+\sum_{i=1}^{k-1}(Q^{i}\varotimes I_{n})\{\mathbf{P}(r_{n+k-i+1})-\mathbf{P}(r_{n+k-i})\}+(Q^{k}\varotimes I_{n})\mathbf{P}(r_{n+1})
\]

Using the fact that $(Q^{*} \varotimes I_n )x_{n}=0$ and adding the telescopic sum  inside the curly brackets, we have
:
\begin{eqnarray*}
\lefteqn{x_{n+k} = ((Q^{k}-Q^{*})\varotimes I_{n})x_{n}-\mathbf{P}(r_{n+k})} \\
 &&+ \ \sum_{i=1}^{k-1}(Q^{i}\varotimes I_{n})\{\mathbf{P}(r_{n+k-i+1})-\mathbf{P}(r_{n+k-i})\}  + \ (Q^{k}\varotimes I_{n})\mathbf{P}(r_{n+1}) + \\
&&\Big\{-\sum_{i=1}^{k-1}(Q^{*}\varotimes I_{n})(\mathbf{P}(r_{n+k-i+1})-\mathbf{P}(r_{n+k-i}))+(Q^{^{*}}\varotimes I_{n})\mathbf{P}(r_{n+k})-(Q^{^{*}}\varotimes I_{n})\mathbf{P}(r_{n+1})\Big\}
\end{eqnarray*}
\begin{multline}\label{apndxmain}
\Longrightarrow x_{n+k}=((Q^{k}-Q^{*})\varotimes I_{n})(x_{n}+\mathbf{P}(r_{n+1})) +(Q^{*}\varotimes I_{n})\mathbf{P}(r_{n+k})-\mathbf{P}(r_{n+k})\\
+ \sum_{i=1}^{k-1}((Q^{i}-Q^{*})\varotimes I_{n})\{\mathbf{P}(r_{n+k-i+1})-\mathbf{P}(r_{n+k-i})\}.
\end{multline}

Taking $n=0$ in (\ref{apndxmain}) and using $\|\mathbf{P}(r)\|\leq C<\infty$ and (\ref{eq:-3}) to bound the norm of $(Q^{i}-Q^{*})\varotimes I_n$, we get
\begin{equation}\label{apndxstab}
\|x_{k}\|\leq \beta^{-k}\|x_{0}+\mathbf{P}(r_{1})\|+2C+2C\sum_{i=1}^{k-1}\kappa\beta^{i}.
\end{equation}

Since the RHS in the above is uniformly bounded, we have  $x_{k}$
uniformly bounded. Now consider (\ref{apndxmain}) again:
\begin{multline}\label{apndxfinal}
x_{n+k}-(Q^{*}\varotimes I_{n})\mathbf{P}(r_{n+k}) + \mathbf{P}(r_{n+k})=\underbrace{((Q^{k}-Q^{*})\varotimes I_{n})(x_{n}+\mathbf{P}(r_{n+1}))}_{(I)}\\
+\underbrace{\sum_{i=1}^{k-1}((Q^{i}-Q^{*})\varotimes I_{n})\{\mathbf{P}(r_{n+k-i+1})-\mathbf{P}(r_{n+k-i})\}}_{(II)}
\end{multline}

(I): This can be bounded by using (\ref{eq:-3}) and (\ref{apndxstab}):
\[
\|((Q^{k}-Q^{*})\varotimes I_{n})(x_{n}+\mathbf{P}(r_{n+1}))\|\leq\kappa\beta^{-k}\|x_{n}+\mathbf{P}(r_{n+1})\|=\mathcal{O}(\beta^{-k})
\]

(II): To bound this term, we first consider (\ref{medium3}) and add $y_{k+1}$ on both sides of
\[
z_{k+1}+y_{k+1}=z_{k}+y_{k}+b_{k}x_{k}+a_{k}(\mathbf{P}(r_{k})-y_{k}+h(y_{k})+M_{k+1}) + (Q\varotimes I_{n})y_k -y_k
\]
\[
\Longrightarrow \ r_{k+1}=r_{k}+b_{k}\{x_{k}+\frac{a_{k}}{b_{k}}(\mathbf{P}(r_{k})-y_{k}+h(y_{k})+M_{k+1})\}    + (Q\varotimes I_{n})y_k -y_k
\]
Since $y_{k}$ is bounded (cf. Assumption 2(iv)) and
$x_{k}$ is bounded from (\ref{apndxstab}),
\[
\|r_{k+1}- r_{k}\| \leq b_{k}M_{x,y} + \epsilon_k
\]
for some random constant $M_{x,y}<\infty$ a.s, with $\epsilon_k= \| (Q\varotimes I_{n})y_k -y_k\|$. We have, by assumption (\ref{apndxb}) on the step size $b_{k}$, that there exists
an $\alpha\in(1,\beta)$ and $k_{0}$ such that
\begin{equation}\label{balpha}
\alpha^{n+k-i}b_{n+k-i}\leq\alpha^{n+k}b_{n+k}\,\,\, \ \forall \ 1\leq i\leq k-1
\end{equation}

Letting $\hat{\beta}=\frac{\beta}{\alpha}>1$,

\begin{align*}
\|\sum_{i=1}^{k-1}((Q^{i}-Q^{*})\varotimes I_{n})\{\mathbf{P}(r_{n+k-i+1})-\mathbf{P}(r_{n+k-i})\}\| & \leq\sum_{i=1}^{k-1}\kappa\beta^{-i}\|\mathbf{P}(r_{n+k-i+1})-\mathbf{P}(r_{n+k-i})\|\\
 & \leq\sum_{i=1}^{k-1}\kappa\beta^{-i}\|r_{n+k-i+1}-r_{n+k-i}\|\\
 & \leq\sum_{i=1}^{k-1}\kappa\beta^{-i}\big(M_{x,y}b_{n+k-i} + \epsilon_{n+k-i}\big)\\
 & \stackrel{(\ref{balpha})}{\leq} \sum_{i=1}^{k-1}\kappa\hat{\beta}^{-i}M_{x,y}b_{n+k} + \sum_{i=1}^{k-1}\kappa\beta^{-i}\epsilon_{n+k-i} \\
 & \leq \frac{\kappa}{\hat{\beta}-1}M_{x,y}b_{n+k} + \frac{\kappa}{\beta-1}\bar{\epsilon}_{n,k}
\end{align*}
where $\bar{\epsilon}_{n,k} = \sup_{\{0 \leq i \leq k-1\}} \epsilon_{n+k-i} \to 0$ as $n \to \infty$. The lemma follows by substituting the above bounds on (I) and (II) in (\ref{apndxfinal}) and taking the limit $n \to \infty$ followed by $k\to \infty$ to obtain

$$  \|x_{n+k}-\{(Q^{*}\varotimes I_{n})\mathbf{P}(r_{n+k})-\mathbf{P}(r_{n+k})\}\| \to 0 $$
\end{proof}

 \end{document}